\pgfplotsset{compat=newest} 
\pgfplotsset{plot coordinates/math parser=false}
\newcommand{\defeq}{:=}
\newcommand{\change}[2]{#2}
\newcommand{\auxvar}{\tau_{\max}}
\newcommand{\lvar}{\Lambda}
\newcommand{\tpar}{\text{\normalfont par}}
\newcommand{\refer}{\text{\normalfont ref}}
\newcommand{\genC}{C(x(\tvar_j),\eta(\tvar_j),c)}
\newcommand{\windowC}{C(x(\tvar_j),\eta(\tvar_j),c)}%
\newcommand{\itilde}{1}
\newcommand{\svar}{t}
\newcommand{\tvar}{s}
\def\QEDclosed{\mbox{\rule[0pt]{1.3ex}{1.3ex}}} %
\def\qed{\QEDclosed} %
\newcommand{\pushright}[1]{\ifmeasuring@#1\else\omit\hfill$\displaystyle#1$\fi\ignorespaces}
\newcommand{\pushleft}[1]{\ifmeasuring@#1\else\omit$\displaystyle#1$\hfill\fi\ignorespaces}
\newtheorem{defi}{Definition}
\newtheorem{theo}{Theorem}
\newtheorem{rema}{Remark}
\newtheorem{coro}{Corollary}
\newtheorem{asum}{Assumption}
\newtheorem{prop}{Proposition}
\title{\LARGE \bf
Dynamic self-triggered control for nonlinear systems based on hybrid Lyapunov functions
}
\author{Michael Hertneck and Frank Allg\"ower%
	\thanks{Funded by Deutsche
		Forschungsgemeinschaft (DFG, German Research Foundation) under Germany’s
		Excellence Strategy - EXC 2075 - 390740016 and under grant
		AL 316/13-2 - 285825138. We acknowledge the support by the Stuttgart
		Center for Simulation Science (SimTech).}%
	\thanks{The authors are with the University of Stuttgart, Institute for Systems Theory and Automatic Control,  70569 Stuttgart, Germany (email: $\{$hertneck, allgower$\}$@ist.uni-stuttgart.de).}
}
\begin{document}

\pubid{\begin{minipage}{\textwidth}\ \\[12pt] \copyright 2021 IEEE. Personal use of this material is permitted. Permission from IEEE must be obtained for all other uses, in any current or future media, including reprinting/republishing this material for advertising or promotional purposes, creating new collective works, for resale or redistribution to servers or lists, or reuse of any copyrighted component of this work in other works.\end{minipage}} 

\maketitle
\pagestyle{empty}

\begin{abstract}
	Self-triggered control (STC) is a well-established technique to reduce the amount of samples for sampled-data systems, and is hence particularly useful for Networked Control Systems. At each sampling instant, an STC mechanism determines not only an updated control input but also when the next sample should be taken. 
	In this paper, a dynamic STC mechanism for nonlinear systems is proposed. The mechanism incorporates a dynamic variable for determining the next sampling instant. Such a dynamic variable for the trigger decision has been proven to be a powerful tool for increasing sampling intervals in the closely related concept of event-triggered control, but was so far not exploited for STC. This gap is closed in this paper.
	For the proposed mechanism, the dynamic variable is chosen to be the filtered values of the Lyapunov function at past sampling instants. The next sampling instant is, based on the dynamic variable and on hybrid Lyapunov function techniques, chosen such that an average decrease of the Lyapunov function is ensured. 
	The proposed mechanism is illustrated with a numerical example from the literature. For this example, the obtained sampling intervals are significantly larger than for existing static STC mechanisms. This paper is the accepted version of \cite{hertneck21dynamic}, containing also proofs of the main results. 
\end{abstract}

\section{Introduction}
Implementing feedback laws using shared hardware or shared communication channels has recently received growing interest due to various applications in networked and embedded control systems.
In such setups, classical periodic sampling of feedback laws often leads to a waste of resources \cite{astroem1999comparison,arzen1999simple}. For Networked Control Systems (NCS), there exist well-known alternatives to periodic sampling based on state-dependent trigger mechanisms instead of time-dependent ones, as, e.g., event- and self-triggered control \cite{heemels2012introduction}. In event-triggered control (ETC), the trigger rule is monitored continuously and a transmission is triggered as soon as the trigger rule is violated. 
In self-triggered control (STC), in contrast, at each sampling instant the controller determines proactively when the next sample should be taken.
It has been demonstrated that STC can reduce the network load for NCS significantly \cite{mazo2009self,anta2010sample}. For linear systems, various STC approaches are available, see, e.g.,  \cite{heemels2012introduction,brunner2019event} and the references therein. 
For nonlinear systems, fewer results are available. In \cite{anta2010sample,delimpaltadakis2020isochronous,delimpaltadakis2020region} isochronous manifolds are employed to determine the next sampling instant. In \cite{benedetto2011digital,tiberi2013simple,theodosis2018self,proskurnikov2019lyapunov}, continuity properties of the considered systems are utilized to choose sampling instants.
All these works have in common that only the current state of the system is taken into account for determining the next sampling instant and no past information about the system behavior is used.

However, for event-triggered control, it has been shown that taking into account such information, e.g., using an averaged version of a purely state-dependent trigger rule, can reduce the required amount of samples significantly \cite{girard2015dynamic}. It therefore appears promising to take the past system behavior into account also for STC.

\pubidadjcol 
To close this gap, we propose in this paper a novel \textit{dynamic STC} mechanism that uses a dynamic variable to increase sampling intervals. The dynamic variable is chosen as an averaged version of a Lyapunov function and thus encodes the past system behavior. The update of the dynamic variable based on sampled values of the Lyapunov function can be described by a finite-impulse-response (FIR) low-pass filter. 
The next sampling instant is chosen at the current one as the largest time for which the value of the Lyapunov function at the next sampling instant is bounded by the value of the dynamic variable. To achieve this, hybrid Lyapunov techniques originally based on \cite{carnevale2007lyapunov,nesic2009explicit} and adapted from \cite{hertneck20stability} are employed to bound the future evolution of the Lyapunov function and to determine the sampling instants. Combining the dynamics of the dynamic variable and the bound on the Lyapunov function, an average decrease of the Lyapunov function can be guaranteed. 
From this, we derive stability guarantees for the dynamic STC mechanism.
 Moreover, the proposed dynamic STC mechanism allows a trade-off between the computational complexity and the conservativeness when determining the next sampling instant. 
 We illustrate the proposed dynamic STC mechanism with a numerical example from the literature and demonstrate that a significant reduction of the amount of samples in comparison to existing static STC mechanisms can be obtained.

 The remainder of this paper is structured as follows. In Section~\ref{sec_setup}, we specify the considered setup using a hybrid system description. Some preliminaries are discussed in Section~\ref{sec_pre}. In Section~\ref{sec_main}, details of the proposed STC mechanism are presented and stability guarantees are derived.
  A numerical example is given in Section~\ref{sec_ex}. Section~\ref{sec_sum} summarizes the paper and gives an outlook on possible future research directions. 

\subsection*{Notation and definitions}
The nonnegative real numbers are denoted by  $\mathbb{R}_{\geq 0} $. The natural numbers are denoted by $\mathbb{N}$, and we define $\mathbb{N}_0:=\mathbb{N}\cup  \left\lbrace 0 \right\rbrace $. The smallest eigenvalue of a  real symmetric matrix $P$ is denoted by $\lambda_{\min}(P)$. A continuous function $\alpha: \mathbb{R}_{\geq 0} \rightarrow \mathbb{R}_{\geq 0}$ is a class $ \mathcal{K}$ function if it is strictly increasing and $\alpha(0) = 0$. It is a class $\mathcal{K}_\infty$ function if it is of class $\mathcal{K}$ and it is unbounded. A continuous function $\beta:\mathbb{R}_{\geq 0}\times \mathbb{R}_{\geq 0} \rightarrow \mathbb{R}_{\geq 0}$ is a class $\mathcal{K}\mathcal{L}$ function, if $\beta(\cdot,t)$ is of class $\mathcal{K}$ for each $t \geq 0$ and $\beta(q,\cdot)$ is nonincreasing and satisfies $\lim\limits_{t \rightarrow \infty} \beta(q,t) = 0$ for each $q \geq 0$. A function $\beta:\mathbb{R}_{\geq 0}\times \mathbb{R}_{\geq 0} \times \mathbb{R}_{\geq 0} \rightarrow \mathbb{R}_{\geq 0}$ is a class $\mathcal{K}\mathcal{L}\mathcal{L}$ function if for each $r \geq 0$, $\beta(\cdot,r,\cdot)$ and $\beta(\cdot,\cdot,r)$ belong to class $\mathcal{K}\mathcal{L}$.

We use \cite[Definitions 1-3]{carnevale2007lyapunov}, that are originally taken from \cite{goebel2006solutions}, to characterize a hybrid model of the considered NCS and corresponding hybrid time domains, trajectories and solutions.
\section{Setup}
\label{sec_setup}
We consider a setup where a continuous plant and a dynamic controller exchange information only at discrete sampling instants. A reason for the discrete sampling may be, e.g., that plant and controller are connected through a communication network.
The plant is described by
\begin{equation}
	\label{eq_plant}
	\dot{x}_p = f_p(x_p,\hat{u}),
\end{equation}
where $x_p(t) \in \mathbb{R}^{n_x}$ is the plant state with initial condition  $x_p(0) = x_{p,0}$ and $\hat{u}(t) \in \mathbb{R}^{n_u}$ is the last input that has been received by the plant. The input is generated by the controller
\begin{equation}
	\label{eq_controller}
	\begin{split}
	\dot{x}_c = f_c(x_c,\hat{x}_p)\\
	u = g_c(x_c,x_p),
	\end{split}
\end{equation}
where $x_c(t)\in\mathbb{R}^{n_c}$ is the state of the controller with some initial condition $x_c(0) = x_{c,0}$ and $\hat{x}_p$ is the last plant state that has been received by the controller.  Note that this controller model admits also static state-feedback controllers. 

The plant is sampled at sampling instants $\svar_j,j\in\mathbb{N}_0$, that are determined by a sampling mechanism to be specified later. 
At each sampling instant, $\hat{x}_p$ and $\hat{u}$ are updated based on the current values of $x_p$ and $u$, i.e., $\hat{x}_p(\svar_j) = x_p(\svar_j)$ and $\hat{u}(\svar_j) = u(\svar_j)$. Further, we denote by $\hat{x}_c$ the controller state associated to the last sampling instant, i.e, $\hat{x}_c(\svar_j) = x_c(\svar_j)$.   Between sampling instants, we assume that $\hat{x}_p$, $\hat{u}$ and $\hat{x}_c$ are kept constant, which resembles a zero-order-hold (ZOH) scenario. 

We introduce the sampling-induced error $e=\left[e_{x_p}^\top,e_{x_c}^\top\right]^\top \coloneqq\left[(\hat{x}_p-x_p)^\top,(\hat{x}_c-x_c)^\top\right]^\top$ and the combined state $x \coloneqq \left[x_p^\top,x_c^\top\right]^\top$. Note that $e(t) \in \mathbb{R}^{n_e}$ and $x(t) \in \mathbb{R}^{n_x}$ for $n_x = n_e \coloneqq n_{x_p} +n_{x_c}$. 

We consider in this paper a dynamic STC mechanism. Similar as for static STC mechanisms, the considered mechanism determines at the current sampling instant  $\svar_j$ the next sampling instant $\svar_{j+1}$. In addition to the current states of plant and controller, the dynamic mechanism uses an internal state $\eta(t)\in\mathbb{R}^{n_\eta}$ which allows to incorporate the past system behavior in the trigger decision. 
The dynamic sampling mechanism can then be described by 
$\svar_{j+1} \coloneqq \svar_j + \Gamma(x(\svar_j),\eta(\svar_j)),$
where $\Gamma:\mathbb{R}^{n_x}\times\mathbb{R}^{n_\eta} \rightarrow \left[\svar_{\min},\infty\right)$ for some $\svar_{\min} > 0$. We will provide later an explicit value for $t_{\min}$ for the proposed mechanism. 

For reasons of practical implementability, the dynamic variable $\eta$ is only updated at sampling instants based on its current value and on current state information, and remains constant in between sampling instants. Thus, $\eta$ evolves according to 
	$	\eta(\svar_{j+1}) = S(\eta(\svar_j),x(\svar_j))$
for some $\eta(0)$, where $S:\mathbb{R}^{n_\eta}\times\mathbb{R}^{n_x} \rightarrow \mathbb{R}^{n_\eta}.$

We introduce the timer variable $\tau$ which keeps track of the elapsed time since the last sampling instant and the auxiliary variable $\auxvar$ which encodes the next sampling interval. With this, we can model the overall networked control system as the hybrid system
\begin{equation}
\label{eq_sys_hyb}
	\begin{cases}
		\dot{\xi} = F(\xi), & \xi \in C,\\
		\xi^+ = G(\xi), & \xi \in D,
	\end{cases}
\end{equation}
with $\xi \coloneqq \left[x^\top,e^\top,\eta^\top,\tau,\auxvar\right]^\top,$
\begin{equation*}
	F(\xi) \coloneqq \left(f(x,e)^\top,g(x,e)^\top,0,1,0\right)^\top,
\end{equation*}
with
\begin{align*}
	f(x,e) = \begin{bmatrix}
		f_p(x_p,g_c(x_c+e_{x_c},x_p+e_{x_p}))\\
		f_c(x_c,x_p+e_{x_p}) 
	\end{bmatrix}	
\end{align*}
and $g(x,e) = -f(x,e)$,
\begin{equation*}
	G(\xi) \coloneqq \left(x^\top,0,S(\eta,x)^\top,0,\Gamma(x,\eta)\right)^\top,
\end{equation*}
and with 
\begin{equation}
	\begin{split}
	C := \left\lbrace \xi \in \mathbb{R}^{n_x+n_e+n_\eta+2} | \tau \leq \auxvar \right\rbrace\\
	D := \left\lbrace \xi \in \mathbb{R}^{n_x+n_e+n_\eta+2} | \tau = \auxvar \right\rbrace.
	\end{split}
\end{equation}

Since sampling instants are modeled as jumps of the hybrid system~\eqref{eq_sys_hyb}, the sampling sequence $\svar_j, j \in \mathbb{N}_0$ corresponds exactly to the time indices when \eqref{eq_sys_hyb} jumps. We thus describe by $\tvar_j \coloneqq  (\svar_j,j)$ the hybrid time before the sampling at time $\svar_j$ and by $\tvar_j^+ \coloneqq (\svar_j,j+1)$ the hybrid time directly after the sampling at time $\svar_j$\change{}{, i.e., the hybrid time instants just before and just after $t_j$}. 

We assume that the self-triggered sampling mechanism is executed at the initial time $t_0 = 0$, i.e., there is a jump at $t = 0$ and we have that $\xi(0,1) = G(\xi(0,0))$. This can be incorporated into the hybrid system \eqref{eq_sys_hyb} by restricting the initial conditions for $\tau(0,0)$ and $\auxvar(0,0)$ to $\auxvar(0,0) =\tau(0,0)$. Note that without this assumption, the first sampling instant might be not well-defined.

Subsequently, our goal will be to design functions $\Gamma$ and $S$ that ensure asymptotic stability of the origin of \eqref{eq_sys_hyb} for a region of attraction $\mathcal{R}$ according to the following definition, which is adapted to our setup from \cite[{Definition~4}]{carnevale2007lyapunov}.

\begin{defi}
	\label{def_stab}
	For the hybrid system~\eqref{eq_sys_hyb}, the set $\left\lbrace \left(x,e,\eta,\tau,\auxvar\right): x = 0, e= 0, \eta = 0 \right\rbrace$ is asymptotically stable with region of attraction $\mathcal{R} \subseteq \mathbb{R}^{n_x}$, if there exists $\beta \in \mathcal{K}\mathcal{L}\mathcal{L}$ such that, for each initial condition $x(0,0) \in \mathcal{R}$, $\eta(0,0) \in \mathbb{R}^{n_\eta}$, $e(0,0) \in \mathbb{R}^{n_e}$, $\tau(0,0) \in \mathbb{R}_{\geq 0}$ and $\auxvar(0,0) = \tau(0,0) $, and each corresponding solution
	\begin{equation}
	\label{eq_stab_bound}
	\norm{\begin{bmatrix}
		x(t,j)\\
		e(t,j)\\
		\eta(t,j)
		\end{bmatrix}} \leq \beta\left(\norm{\begin{bmatrix}
		x(0,0)\\
		e(0,0)\\
		\eta(0,0)
		\end{bmatrix}},t, j\right)
	\end{equation}
	for all $(t,j)$ in the solutions domain. 
\end{defi} 
While local results are typically more relevant for nonlinear systems, our results are also applicable to obtain global guarantees. In this case, we use the following definition.
\begin{defi}
	If the hybrid system~\eqref{eq_sys_hyb} satisfies Definition~\ref{def_stab} with $\mathcal{R} = \mathbb{R}^{n_x}$, then the set $\left\lbrace \left(x,e,\eta,\tau,\auxvar\right): x = 0, e= 0, \eta = 0 \right\rbrace$ is uniformly globally asymptotically stable.
\end{defi}
 \section{Preliminaries}
 \label{sec_pre}
 In this section, we present an assumption that we make on the dynamics of the hybrid system~\eqref{eq_sys_hyb} and state a preliminary result that we will exploit later to determine sampling instants. 
  We make the following assumption that is based on \cite[Assumption~1]{carnevale2007lyapunov}.
  \begin{asum}
  	\label{asum_hybrid_lyap}
  	Consider some sets $\mathcal{X} \subseteq \mathbb{R}^{n_x}$ and $\mathcal{E} \subseteq \mathbb{R}^{n_e}$. There exist a locally Lipschitz function $W:\mathcal{E} \rightarrow \mathbb{R}_{\geq0}$, a locally Lipschitz function $V:\mathcal{X} \rightarrow \mathbb{R}_{\geq0}$, a continuous function $H:\mathcal{X}\times\mathcal{E} \rightarrow \mathbb{R}_{\geq0}$, constants $L, \gamma\in \mathbb{R}_{>0}$, $\epsilon\in \mathbb{R}$, and  $\underline{\alpha}_W$, $\overline{\alpha}_W, \underline{\alpha}_V, \overline{\alpha}_V \in \mathcal{K}_\infty$  such that for all $e\in\mathcal{E}$,
  	\begin{equation}
  	\label{eq_w_bound}
  	\underline{\alpha}_W(\norm{e}) \leq W(e) \leq \overline{\alpha}_W(\norm{e}),
  	\end{equation}
  	for all $x \in\mathcal{X}$,
  	\begin{equation}
  	\label{eq_V_bound_K}
  	\underline{\alpha}_V(\norm{x}) \leq V(x) \leq \overline{\alpha}_V(\norm{x}),
  	\end{equation}
  	and for all  $x \in \mathcal{X} $ and almost all $e\in\mathcal{E},$ 
  	\begin{equation}
  	\left\langle \frac{\partial W(e)}{\partial e},g(x,e)\right\rangle \leq L W(e) + H(x,e).  \label{eq_w_est}
  	\end{equation}
  	Moreover, for all $e \in \mathcal{E}$ and almost all $x \in \mathcal{X}$,  
  	\begin{equation}
  	\left\langle \nabla V(x),f(x,e) \right\rangle
  	\leq - \epsilon V(x) -H^2(x,e) + \gamma^2 W^2(e).	\label{eq_v_desc_hybrid}
  	\end{equation}
  	\end{asum}

  	\vspace{5pt}
  	Note that there are some differences between Assumption~\ref{asum_hybrid_lyap} and \cite[Assumption~1]{carnevale2007lyapunov}. Firstly, Assumption~\ref{asum_hybrid_lyap} is stated in a local fashion. A global version similar to \cite[Assumption~1]{carnevale2007lyapunov} can be obtained for $\mathcal{X} = \mathbb{R}^{n_x}$ and $\mathcal{E} = \mathbb{R}^{n_e}$. Secondly, we restrict ourselves to systems that satisfy the exponential input-to-state stability (ISS) condition \eqref{eq_v_desc_hybrid} instead of considering the more general ISS condition from \cite[Assumption~1]{carnevale2007lyapunov} (i.e., we replace $\varrho(\norm{x})$ by $\epsilon V(x)$ in \eqref{eq_v_desc_hybrid}). This simplification is made to streamline presentation and can in principle be relaxed as we will discuss at the end of Section~\ref{sec_main}. 
  	Thirdly, we use the more general function $H(x,e)$ instead of restricting ourselves to $H(x)$, which often admits smaller values for $\gamma$ and $L$, as it has been demonstrated in \cite{hertneck20simple}. \change{}{Moreover, this modification allows us for $W(e) = \norm{e}$ to chose $H(x,e) = \norm{g(x,e)}$ and $L$ arbitrarily small and thus yields a systematic procedure to verify \eqref{eq_w_est} in this case.}
  	  	 
  	 It is important to note here that Assumption~\ref{asum_hybrid_lyap} can  typically be satisfied for various different choices of the parameters $\epsilon,\gamma$ and $L$. If one can find one set of parameters, then one will typically also be able to find many different sets of parameters, e.g., by varying $\epsilon$. \change{}{Approaches for verifying Assumption~\ref{asum_hybrid_lyap} for given $\epsilon$ can, e.g., be found in \cite{hertneck20simple}.}  
  	 
  	The proposed dynamic STC mechanism will be based on a bound on the evolution of $V$ between two sampling instants, which has already been presented in \cite{hertneck20stability}. This bound employs the function
  	\begin{equation}
  	T_{\max}(\gamma,\lvar ) \defeq \begin{cases}\vspace{1mm}
  	\frac{1}{\lvar r} \mathrm{arctan}(r) & \gamma > \lvar \\ \vspace{1mm}
  	\frac{1}{\lvar } & \gamma = \lvar \\
  	\frac{1}{\lvar r} \mathrm{arctanh}(r) &\gamma < \lvar 
  	\end{cases}
  	\end{equation}
  	where
  	\begin{equation}
  	\label{eq_def_r}
  	r\defeq\sqrt{\abs{
  			\left(\frac{\gamma}{\lvar }\right)^2-1}}.
  			\end{equation} 
  			This function was originally used in  \cite{nesic2009explicit}. \change{}{In Figure~\ref{fig_surf}, an overview over the function $T_{\max}(\gamma,\Lambda)$ is given in the form of a surface plot. It can be seen that the function gets larger as $\gamma$ and $\Lambda$ get smaller.}
  			
  			\begin{figure}
  				\resizebox{\columnwidth}{!}{\includegraphics{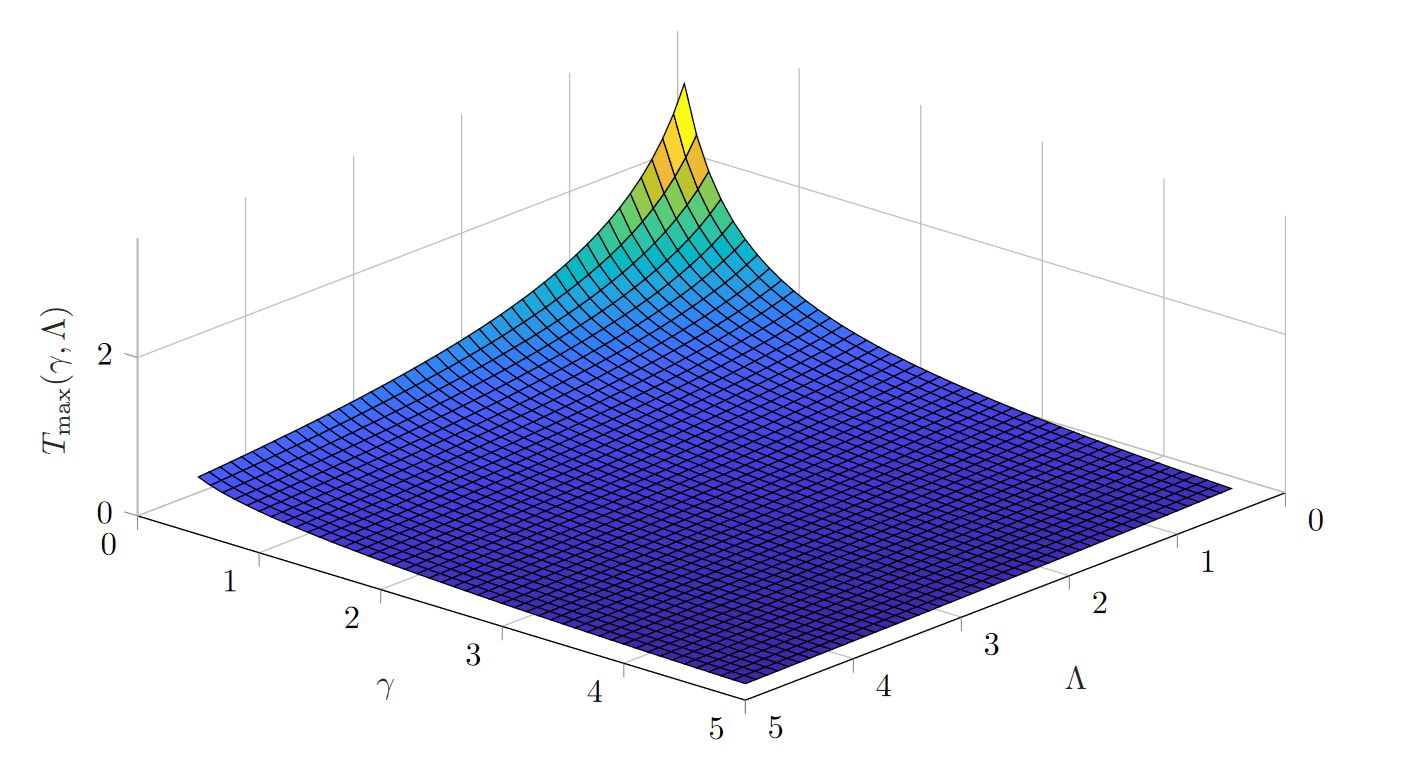}}
  				\caption{Surface plot of $T_{\max}(\gamma,\Lambda)$.}
  				\label{fig_surf}
  			\end{figure}
  			
  			 \change{In particular, we}{We } adapt from \cite[Proposition~12]{hertneck20stability} the following result.
  			
	\begin{prop}
		\label{prop_hybrid}
		Consider system \eqref{eq_sys_hyb} at sampling instant $\tvar_j^+$ for some $j \in \mathbb{N}_0$.  Let Assumption~\ref{asum_hybrid_lyap} hold for $\mathcal{X} \supseteq \mathcal{R} \coloneqq \left\lbrace x|V(x) \leq c \right\rbrace$ and\footnote{Note that $\left\lbrace e = \hat{x}-x|\hat{x}\in \mathcal{R},x\in\mathcal{R}\right\rbrace$ is the Minkowski sum of $\mathcal{R}$ and $-\mathcal{R}$.} $\mathcal{E} \supseteq \left\lbrace e = \hat{x}-x|\hat{x}\in \mathcal{R},x\in\mathcal{R}\right\rbrace$ for some $\gamma, \epsilon, L$ and $c \geq \max\left\lbrace 1, e^{\max\left\lbrace -\epsilon,2(L-\lvar ) \right\rbrace(\auxvar(s_j^+))} \right\rbrace V(x(\tvar_j^+))$. 
		
		Moreover, let $0 < \auxvar(\tvar_j^+) < T_{\max} (\gamma,\lvar )$  for $\lvar  >0$.	
		Consider 
		\begin{equation}
		\label{eq_def_u}
		U(\xi) = V(x)+\gamma \phi(\tau) W^2(e),
		\end{equation}
		where 	 $\phi : [0,\auxvar(\tvar_j^+)] \rightarrow \mathbb{R}$ is the solution to
		\begin{equation}
			\label{eq_def_phi}
			\dot{\phi} = -2\lvar \phi-\gamma(\phi^2+1),~ \phi(0) = \lambda^{-1}
		\end{equation}
		for some sufficiently small $\lambda \change{> 0}{\in \left(0,1\right)}$.
		Then, for all $\svar_j \leq t \leq \svar_j+\auxvar(\tvar_j^+)$, it holds that
		\begin{equation}
		\change{}{V(x(t))\leq} U(\xi(\svar,j+1)) 	\leq e^{\max\left\lbrace -\epsilon,2(L-\lvar ) \right\rbrace(t-\svar_j)} V(x(\tvar_j^+)). \label{eq_prop_hybrid1}
		\end{equation}
		\end{prop}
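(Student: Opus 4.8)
The plan is to prove the two inequalities in \eqref{eq_prop_hybrid1} by a comparison argument for $U$ along the flow of \eqref{eq_sys_hyb} on the interval $[\svar_j,\svar_j+\auxvar(\tvar_j^+)]$, combined with an invariance argument that keeps the trajectory inside the region where Assumption~\ref{asum_hybrid_lyap} is valid. First I would record the data at the post-jump time $\tvar_j^+$ and the flow dynamics on the interval: by the jump map $G$ one has $\tau=0$ and $e=0$ at $\tvar_j^+$, so \eqref{eq_w_bound} forces $W(e)=0$ and hence $U(\xi(\tvar_j^+))=V(x(\tvar_j^+))$; along the flow $\dot\tau=1$, $\dot e=g(x,e)$, $\dot x=f(x,e)$, while $\eta$ and $\auxvar$ stay constant. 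I would also invoke the properties of \eqref{eq_def_phi} from \cite{nesic2009explicit,hertneck20stability}: the hypothesis $0<\auxvar(\tvar_j^+)<T_{\max}(\gamma,\lvar)$ with sufficiently small $\lambda$ guarantees that $\phi$ is well-defined, strictly decreasing, and strictly positive on $[0,\auxvar(\tvar_j^+)]$ (it reaches $\lambda$ only at $T_{\max}$). Positivity of $\phi$ then gives the left inequality immediately, since $V(x)\le V(x)+\gamma\phi(\tau)W^2(e)=U(\xi)$.

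The core is the bound on $\dot U$. Differentiating \eqref{eq_def_u} along the flow and substituting \eqref{eq_v_desc_hybrid} for $\langle\nabla V,f\rangle$, \eqref{eq_w_est} for $\frac{d}{dt}W^2=2W\langle\partial_e W,g\rangle$ (using $W\ge 0$ and $\phi\ge 0$ to preserve the inequality), and the $\phi$-dynamics \eqref{eq_def_phi} for $\gamma\dot\phi W^2$, the $\pm\gamma^2W^2$ contributions cancel and all $H$-dependent terms collapse into a perfect square $-(H-\gamma\phi W)^2\le 0$. What remains is
$\dot U \le -\epsilon V + 2(L-\lvar)\gamma\phi W^2 = -\epsilon V + 2(L-\lvar)(U-V)$,
where I have used $\gamma\phi W^2=U-V$. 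Setting $\mu:=\max\{-\epsilon,\,2(L-\lvar)\}$, a short case distinction on the sign of $2(L-\lvar)+\epsilon$ (using $U-V=\gamma\phi W^2\ge 0$ in one case and $V\ge 0$ in the other) yields $\dot U\le \mu U$ for almost all $t$ on the interval. The comparison lemma then gives $U(\xi(t,j+1))\le e^{\mu(t-\svar_j)}U(\xi(\tvar_j^+))=e^{\mu(t-\svar_j)}V(x(\tvar_j^+))$, which is the right inequality; together with the left inequality this establishes \eqref{eq_prop_hybrid1}.

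The main obstacle is making this differential-inequality step rigorous, because \eqref{eq_v_desc_hybrid} and \eqref{eq_w_est} hold only on $\mathcal{X}$ and $\mathcal{E}$, and only for almost all $x$, respectively $e$. I would resolve this with a continuation argument. Let $t^\ast$ be the supremum of times $t\le \svar_j+\auxvar(\tvar_j^+)$ up to which $x(\cdot)\in\mathcal{R}$ and $e(\cdot)\in\mathcal{E}$; this set is nonempty since at $\tvar_j^+$ one has $e=0\in\mathcal{E}$ and $V(x(\tvar_j^+))\le c$. On $[\svar_j,t^\ast)$ the above bound applies and gives $V(x(t))\le e^{\mu(t-\svar_j)}V(x(\tvar_j^+))\le \max\{1,\,e^{\mu\auxvar(\tvar_j^+)}\}V(x(\tvar_j^+))\le c$ by the hypothesis on $c$, so $x(t)$ stays in $\mathcal{R}$; since the held state $\hat{x}=x(\svar_j)\in\mathcal{R}$ as well, $e(t)=\hat{x}-x(t)$ stays in the Minkowski-sum set contained in $\mathcal{E}$. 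Continuity of the trajectory then forces $t^\ast=\svar_j+\auxvar(\tvar_j^+)$, so the bound holds on the whole interval.

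The remaining technical point, the a.e.\ differentiability along the trajectory, is handled in the standard way: $V$ and $W$ are locally Lipschitz and the trajectory is absolutely continuous on the interval, so $t\mapsto U(\xi(t,j+1))$ is absolutely continuous and the chain-rule estimates used above hold for almost every $t$, which is exactly what the comparison lemma requires. I expect the perfect-square cancellation in the $\dot U$ computation and the invariance bookkeeping for $c$ to be where all the work concentrates, the rest being routine.
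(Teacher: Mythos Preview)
Your proposal is correct and follows essentially the same route as the paper's proof: the same derivative computation on $U$ using \eqref{eq_v_desc_hybrid}, \eqref{eq_w_est}, and \eqref{eq_def_phi}, the same perfect-square cancellation $-(H-\gamma\phi W)^2\le 0$, the same comparison-lemma step to obtain the exponential bound, and the same invariance argument (the paper phrases it as ``use this argumentation iteratively'', while you set it up as a supremum time $t^\ast$). Your treatment is in fact slightly more explicit on two points the paper leaves terse---the continuation argument and the a.e.\ differentiability via local Lipschitz continuity---and your case distinction on the sign of $2(L-\Lambda)+\epsilon$ is not actually needed (since $\mu\ge -\epsilon$ and $\mu\ge 2(L-\Lambda)$ together with $V\ge 0$ and $\gamma\phi W^2\ge 0$ give $\dot U\le \mu U$ directly), but this does not affect correctness.
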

\begin{proof}
	See Appendix~\ref{app_hyb}.
\end{proof}

 \section{Dynamic STC with stability guarantees}
 \label{sec_main}
 In this section, we will present the details of the proposed dynamic STC mechanism and derive stability guarantees. First, we describe how the dynamic variable is used for the proposed dynamic STC mechanism. Then, we discuss how the results on hybrid Lyapunov functions from the previous section can be exploited to determine the sampling instants. Finally, we guarantee asymptotic stability for the proposed mechanism.
 \subsection{General idea for the dynamic STC mechanism}
 In this subsection, we will describe how the dynamic variable is used by the proposed dynamic STC mechanism. We are interested in guaranteeing stability of the set $\left\lbrace \left(x,e,\eta,\tau,\auxvar\right): x = 0, e= 0, \eta = 0 \right\rbrace$ for a region of attraction described by a level set $\mathcal{R} = \left\lbrace x|V(x) \leq c \right\rbrace$ of a Lyapunov function $V$ for some chosen $c > 0$. \change{}{The constant $c$ hereby needs to be chosen such, that the region of interest of the state space is covered by $\mathcal{R}$. } 
 
 \change{The idea \change{to achieve this}{for guaranteeing stability } is hereby to use Proposition~\ref{prop_hybrid} and the resulting bound on $V$ at sampling instant $\tvar_j$ to determine  the next sampling instant $\tvar_{j+1}$.}{}  \change{}{The idea for the dynamic STC mechanism is, to chose if possible $\tvar_{j+1}$ such that }\change{If possible, $\tvar_{j+1}$ is chosen such that }{}  $V(x(\tvar_{j+1}))$ does not exceed a discounted average of the past $m$ values of $V$ from previous sampling instants for some chosen $m > 0$, i.e., $\tvar_{j+1}$ is chosen such that  \begin{equation}
 \label{eq_idea_window}
 V(x(\tvar_{j+1})) \leq e^{-\epsilon_\refer(\svar_{j+1} - \svar_j)}\frac{1}{m} \sum_{k=j-m+1}^{j} V(x(\tvar_k)).
 \end{equation}
  Here, a tunable constant $\epsilon_\refer > 0$ is used to guarantee a certain decay of $V(x)$ in comparison to the past average\change{, which is important for obtaining stability guarantees for the proposed STC mechanism.}{, thus allowing us to ensure a decrease of the right hand side of \eqref{eq_idea_window} over time. If \eqref{eq_idea_window} cannot be guaranteed at a sampling instant, then the dynamic STC mechanism will use a fall-back strategy to be specified later. Both together will lead to stability guarantees for the dynamic STC mechanism.}

 \change{For implementing the average, the}{Next, we will explain how sufficient conditions for \eqref{eq_idea_window} can be implemented in our setup. The }  dynamic variable $\eta$ can be used as a storage for the past values of $V(x)$. To do so, we can choose $n_\eta = m-1$ and define the update rule \change{}{for the dynamic variable } as
 \begin{equation}
 \label{eq_S_window}
 S(\eta,x) = \begin{pmatrix}
 \eta_2&
 \hdots&
 \eta_{m-1}&
 V(x)
 \end{pmatrix}^\top.
 \end{equation}
 Observe that for this choice of $S(\eta,x)$, at time $\tvar_j$, if $j \geq m$, then it holds that $ \frac{1}{m} \sum_{k=j-m+1}^{j} V(x(\tvar_k)) =\frac{1}{m} \left(V(x(\tvar_j))+\sum_{k=1}^{m-1} \eta_k(\tvar_j)\right)$.
 
\change{ To be able}{To enable the dynamic STC mechanism } to use Proposition~\ref{prop_hybrid}, we will employ subsequently Assumption~\ref{asum_hybrid_lyap}  for $\mathcal{X} \supseteq \mathcal{R}$ and $\mathcal{E} \supseteq \left\lbrace e = \hat{x}-x|\hat{x}\in \mathcal{R},x\in\mathcal{R}\right\rbrace$. The STC mechanism therefore needs to ensure that $x(t,j) \in \mathcal{R}$ and $e(t,j) \in \mathcal{E}$ for all $(t,j)$. This will be achieved by the additional condition
 \begin{equation}
 \label{eq_V_c}
 	V(x(s_{j+1})) \leq c
 \end{equation}
 that has to be ensured by the dynamic STC mechanism. We can now summarize \eqref{eq_idea_window} and \eqref{eq_V_c} as
   \begin{equation}
   \label{eq_idea_window2}
   V(x(\tvar_{j+1})) \leq e^{-\epsilon_\refer(\svar_{j+1} - \svar_j)}  \windowC,
   \end{equation}
   where 
   \begin{align*}
   	&\windowC\\ 
   	\coloneqq& \min\left\lbrace c, \frac{1}{m} \left(V(x(\tvar_j))+\sum_{k=1}^{m-1} \eta_k(\tvar_j)\right) \right\rbrace.
   \end{align*}
\change{}{Thus, if the dynamic STC mechanism ensures \eqref{eq_idea_window2}, then it ensures \eqref{eq_idea_window} and in addition that Proposition~\ref{prop_hybrid} can be used. }
  For $j < m$, the value of $ \windowC$ may depend on the initial condition for $\eta$,  which can typically be chosen by the user to tune the initial behavior of the STC mechanism and does not influence stability guarantees.

 \subsection{Determining the next sampling instant}
 In this subsection, we discuss how Proposition~\ref{prop_hybrid} can be used at time $\tvar_j$ if $V(x(\tvar_j^+)) \leq c$ to maximize $\svar_{j+1} = \svar_j + \auxvar(s_j^+)$ such that \eqref{eq_idea_window2} holds.  We suppose subsequently that Assumption~\ref{asum_hybrid_lyap} holds for $V$, $\mathcal{X}\supseteq\mathcal{R}$ and $\mathcal{E}\supseteq \left\lbrace e = \hat{x}-x|\hat{x}\in \mathcal{R},x\in\mathcal{R}\right\rbrace$ for $n_\tpar \in\mathbb{N}$ different parameter sets $\gamma_i,\epsilon_i,L_i, i \in \left[1,\dots,n_\tpar\right]$.
 
 At least for one of the parameter sets, to which we assign the index $\itilde$, we require that $\epsilon_{\itilde} >0$ in order to obtain stability guarantees. For all further parameter sets with index $i \neq \itilde,$ $\epsilon_i$ may be negative, which will turn out to be very useful to maximize the time between sampling instants.
 
  Depending on the parameters, different values for $T_{\max} (\gamma,\lvar )$ can be obtained that allow to make statements about $V$ based on Proposition~\ref{prop_hybrid} for different time spans between sampling instants. Depending on the time span, either a decrease of $U$ and thus of $V$ can be guaranteed or, e.g, for negative $\epsilon$, a bound for the maximum increase of $V$ can be obtained.

Recall that we want to maximize $\auxvar(s_j^+)$ such that \eqref{eq_idea_window2} holds. For any of the parameter sets, we observe due to \eqref{eq_prop_hybrid1} and $V(x(t,j))\leq~U(\xi(t,j))~ \forall t,j$ that  \eqref{eq_idea_window2} holds if
\begin{align}
\nonumber &e^{\max\left\lbrace -\epsilon_i,2(L_i-\lvar_i ) \right\rbrace\auxvar(\tvar_j^+)} V(x(\tvar_j))\\
\leq& e^{-\epsilon_\refer\auxvar(\tvar_j^+)} \windowC \label{eq_dec_tr}
\end{align}
and $\auxvar(\tvar_j^+) < T_{\max}(\gamma_i,\Lambda_i)$ for some $i$. Thus, our goal is now to maximize $\auxvar(\tvar_j^+)$ such that there is at least one parameter set $i$ for which  $\eqref{eq_dec_tr}$ and $\auxvar(\tvar_j^+) < T_{\max}(\gamma_i,\Lambda_i)$ hold for some $\Lambda_i$. 

To be able to search efficiently for $\auxvar(s_j^+)$, we make two simplifications. Firstly, we replace the condition $\auxvar(\tvar_j^+) < T_{\max}(\gamma_i,\Lambda_i)$ by $\auxvar(\tvar_j^+) \leq \delta T_{\max}(\gamma_i,\Lambda_i)$ for some\footnote{Typically $\delta$ will be chosen close to $1$ (e.g. 0.999) in order to obtain a preferably large inter-sampling interval.\label{fn_1}} $\delta \in \left(0,1\right).$
Secondly, we fix\footnote{Note that chosing $\Lambda_i$ larger would not provide any advantage, whilst smaller choices could in some situations be advantageous. Since the expected advantage is typically minor, we omit it in the STC mechanism to reduce computational complexity. A line search could be included to exploit different values for $\Lambda_i$.}  $\Lambda_i = \max \left\lbrace L_i+\frac{\epsilon_i}{2}, 1-\delta \right\rbrace$.  Here the (typically small) positive value of $1-\delta$ avoids that $\Lambda_i \leq 0$. Note that Proposition~\ref{prop_hybrid} still applies despite the simplifications. 

The second simplification allows us to rewrite \eqref{eq_dec_tr} as
\begin{align}
 \left(-\epsilon_i +\epsilon_\refer\right)\auxvar(\tvar_j^+)
\leq \log(\frac{\genC}{V(x(\tvar_j))}).
\label{eq_ln}
\end{align}

Maximizing $\auxvar(\tvar_j^+)$ such that \eqref{eq_ln} and $\auxvar(\tvar_j^+) \leq \delta T_{\max}(\gamma_i,\Lambda_i)$ hold is straightforward.  It can be done in closed form by distinguishing  between the following four cases.
\begin{enumerate}[leftmargin=1.5 em]
	\item If $\genC \geq  V(x(\tvar_j))$ and $-\epsilon_i + \epsilon_\refer > 0$, then we obtain
	\begin{align*}
	\nonumber
	\auxvar(\tvar_j^+)  =& \min\left\lbrace \delta T_{\max}(\gamma_i,\Lambda_i)\vphantom{\frac{\log(\windowC)-\log(V(x(\tvar_j)))}{\max\left\lbrace -\epsilon_i,2(L_i-\lvar_i ) \right\rbrace + \epsilon_\refer}}\right.,\\
	&\left.\frac{\log(\genC)-\log(V(x(\tvar_j)))}{ -\epsilon_i + \epsilon_\refer} \right\rbrace.
	\end{align*}
	\item 	If $\genC \geq  V(x(\tvar_j))$ and $ -\epsilon_i + \epsilon_\refer \leq 0$, we obtain $\auxvar(\tvar_j^+) = \delta T_{\max}(\gamma_i,\Lambda_i)$.
	\item If $\genC < V(x(\tvar_j))$ and $ -\epsilon_i + \epsilon_\refer \geq 0$, then $\auxvar(\tvar_j^+)$ that solves \eqref{eq_dec_tr} would be negative. We are not interested in negative values for $\auxvar(\tvar_j^+)$ and therefore do not go into further detail here.
	\item If $\genC <  V(x(\tvar_j))$ and $ -\epsilon_i + \epsilon_\refer < 0,$ then \eqref{eq_dec_tr} would hold if $\auxvar(\tvar_j) \geq   \frac{\log(\genC)-\log(V(x(\tvar_j)))}{ -\epsilon_i + \epsilon_\refer} \coloneqq \bar t.$ Taking into account the additional condition  $\bar h_i \leq \delta T_{\max}(\gamma_i,\Lambda_i)$  
	we obtain in this case the solution $\auxvar(\tvar_j^+) = 	\delta T_{\max}(\gamma_i,\Lambda_i)$ if $\delta T_{\max}(\gamma_i,\Lambda_i) > \bar t$. If $\delta T_{\max}(\gamma_i,\Lambda_i) \leq \bar t$, then no solution exists.
\end{enumerate}
Note that Proposition~\ref{prop_hybrid} provides only a sufficient condition for \eqref{eq_prop_hybrid1}. Thus, if there is no positive value for $\auxvar(\tvar_j^+)$ for one specific parameter set for which \eqref{eq_ln} holds with $\auxvar(\tvar_j^+) \leq \delta T_{\max}(\gamma_i,\Lambda_i)$, then this does not mean that \eqref{eq_idea_window2} cannot be satisfied. Instead, different parameter sets can be tried in this case. To determine the next sampling instant $\svar_{j+1} = \auxvar(\tvar_{j}^+) + \svar_j$, the STC mechanism therefore probes all parameter sets for the largest possible value of $\auxvar(\tvar_{j}^+)$ for which it can be guaranteed that \eqref{eq_idea_window2} holds. This value is then used to determine the next sampling instant. 

\change{}{Whilst~\eqref{eq_idea_window2} would be sufficient to guarantee asymptotic stability of the set $\left\lbrace \left(x,e,\eta,\tau,\auxvar\right): x = 0, e= 0, \eta = 0 \right\rbrace$ if it would hold at all sampling instants, there may still be the situation, that we cannot guarantee that \eqref{eq_idea_window2} holds at a sampling instant. Thus, }we also need to handle the situation where we cannot guarantee for any parameter set that \eqref{eq_idea_window2} holds. In this case, we \change{will}{} exploit that $\epsilon_1 > 0$. If we set $\auxvar(s_j^+) = T_{\max}(\gamma_1,L_1+\frac{\epsilon_1}{2}) > 0$, then we can conclude from Proposition~\ref{prop_hybrid} that
\begin{equation}
\label{eq_dec_fb}
	V(x(s_{j+1})) \leq e^{-\epsilon_1 \delta T_{\max}\left(\gamma_1,L_1+\frac{\epsilon_1}{2}\right)} V(x(s_j)),
\end{equation}
i.e., a certain decrease of $V$ can be guaranteed in this case. \change{}{This decrease of $V$ can be used to ensure asymptotic stability of the set $\left\lbrace \left(x,e,\eta,\tau,\auxvar\right): x = 0, e= 0, \eta = 0 \right\rbrace$. } \change{This}{ Hence, the choice $\auxvar(s_j^+) = T_{\max}(\gamma_1,L_1+\frac{\epsilon_1}{2}) > 0$ } will be used as fall-back strategy if there is no parameter set for which \eqref{eq_ln} holds with $\auxvar(\tvar_j) > 0$, i.e., if we cannot guarantee for any parameter set that \eqref{eq_idea_window2} holds. 

The overall procedure to determine $\auxvar(s_j^+)$ is summarized in Algorithm~\ref{algo_trig_window}, which serves therefore as an implicit definition of  the function $\Gamma$ for the proposed STC mechanism. 
\begin{table}[tb]
	\begin{algorithm}[H]
		\caption[Caption for LOF]{Computation of $\auxvar = \Gamma(x(\tvar_j),\eta(\tvar_j))$ for the proposed STC mechanism for some $\delta \in \left(0,1\right)$ and $\epsilon_\refer > 0$.}
		\label{algo_trig_window}
		\begin{algorithmic}[1]
			\STATE $V \leftarrow V(x(\tvar_j))$, $C \leftarrow \genC$%
			\STATE $\bar h \leftarrow \delta T_{\max}(\gamma_\itilde,L_\itilde+\frac{\epsilon_\itilde}{2})$ \label{line_fallback}
			\FOR{\text{\bf each} $i \in \left[2,\dots,n_\tpar\right]$ } 
			\STATE $\Lambda_i \leftarrow \max \left\lbrace L_i + \frac{\epsilon_i}{2},(1-\delta) \right\rbrace$
			\IF{$C \geq V$} \label{line_for_start} %
			\IF{$-\epsilon_i+\epsilon_\refer > 0$}
			\STATE $\bar h_i \leftarrow \min\left\lbrace \delta T_{\max}(\gamma_i,\Lambda_i),				\frac{\log(C)-\log(V)}{ -\epsilon_i+ \epsilon_\refer} \right\rbrace$
			\ELSE
			\STATE $\bar h_i \leftarrow \delta T_{\max}(\gamma_i,\Lambda_i)$
			\ENDIF
			\ELSE
			\IF{$-\epsilon_i+\epsilon_\refer \geq 0$}
			\STATE $\bar h_i \leftarrow 0$
			\ELSE
			\STATE$ \bar t \leftarrow \frac{\log(C)-\log(V)}{ -\epsilon_i+ \epsilon_\refer}$
			\IF{$\bar t < \delta T_{\max}(\gamma_i,\Lambda_i)$}
			\STATE $\bar h_i \leftarrow \delta T_{\max}(\gamma_i,\Lambda_i)$
			\ELSE
			\STATE $\bar h_i \leftarrow 0$
			\ENDIF
			\ENDIF											
			\ENDIF\label{line_for_end}
			\IF{$\bar h_i > \bar h$}
			\STATE $\bar h \leftarrow \bar h_i$\label{line_h_update}
			\ENDIF
			\ENDFOR 			
			\STATE $\Gamma(x(\tvar_j),\eta(\tvar_j)) \leftarrow \bar{h}$ \label{line_gamma}
		\end{algorithmic}
	\end{algorithm}
	\vspace{-10mm}
\end{table}
The first step of the algorithm is to determine $V$ and $C$ based on $x(\tvar_j)$, $\eta(\tvar_j)$ and $c$. Then, a strictly positive minimum value for $\auxvar$ is determined using the fall-back strategy based on $\epsilon_{\itilde}, \gamma_{\itilde}$ and $L_{\itilde}$ and stored to $\bar{h}$.
After that, an iteration over all other parameter sets is started.  
For each parameter set, the algorithm computes a preferably large value $\bar h_i \leq \delta T_{\max}(\gamma_i,\Lambda_i)$ for which \eqref{eq_ln} holds for $\auxvar(\tvar_j^+)$  according to the previously discussed procedure. If such a value exists for a parameter set and exceeds the current value of $\bar h_i$, then the algorithm updates $\bar h_i$. If not, then the algorithm directly proceeds to the next parameter set, until all parameter sets have been investigated. 

Thus, Algorithm~\ref{algo_trig_window} ensures that either \eqref{eq_idea_window2} or \eqref{eq_dec_fb} hold. Moreover, it guarantees that $t_{\min} \geq \delta T_{\max}\left(\gamma_1,L_1+\frac{\epsilon_1}{2}\right)>0$. We will demonstrate in the next subsection how stability guarantees can be derived from these facts.

\subsection{Stability results}
In this subsection, we derive stability guarantees for the proposed dynamic STC mechanism based on the findings from the previous subsections. The steps carried out by the dynamic STC mechanism at a sampling instant are summarized in Algorithm~\ref{algo_stc}. We obtain the following stability result.

\begin{theo}
	\label{theo_stab}
	Assume there are $n_\tpar $ different parameter sets $\epsilon_i,\gamma_i,L_i$, $i\in\left[1,\dots,n_\tpar\right]$, for which Assumption~\ref{asum_hybrid_lyap} holds for the same function $V(x)$, $\mathcal{X} \supseteq \mathcal{R}$ and $\mathcal{E}\supseteq \left\lbrace e = \hat{x}-x|\hat{x}\in \mathcal{R},x\in\mathcal{R}\right\rbrace$. Let $\epsilon_1 > 0$. Consider \eqref{eq_sys_hyb} with $S(\eta,x)$  and $\Gamma(x,\eta)$ defined according to \eqref{eq_S_window}  and by Algorithm~\ref{algo_trig_window} with some $\delta\in\left(0,1\right)$ and $\epsilon_\refer > 0$. 
	Then $\svar_{j+1} -\svar_j \geq t_{\min} \coloneqq \delta T_{\max}(\gamma_1,L_1+\frac{\epsilon_1}{2})$ for all $j\in\mathbb{N}_0$ and the set $\left\lbrace \left(x,e,\eta,\tau,\auxvar\right): x = 0, e= 0, \eta = 0 \right\rbrace$ is asymptotically stable with region of attraction $\mathcal{R}$.
\end{theo}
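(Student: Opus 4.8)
The plan is to establish the dwell-time bound directly from Algorithm~\ref{algo_trig_window}, and then to prove asymptotic stability in three moves: forward invariance of $\mathcal{R}$ (so that Proposition~\ref{prop_hybrid} is applicable on every interval), convergence of $V$ along the sampling sequence via a sliding-window argument, and finally the assembly of a $\mathcal{K}\mathcal{L}\mathcal{L}$ estimate for the full state $(x,e,\eta)$.

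First, the dwell time. In Algorithm~\ref{algo_trig_window} the variable $\bar h$ is initialised in line~\ref{line_fallback} to $\delta T_{\max}(\gamma_1,L_1+\frac{\epsilon_1}{2})$ and is subsequently only ever increased (line~\ref{line_h_update}). Hence $\Gamma(x(\tvar_j),\eta(\tvar_j))=\bar h\geq t_{\min}:=\delta T_{\max}(\gamma_1,L_1+\frac{\epsilon_1}{2})>0$, so $\svar_{j+1}-\svar_j=\auxvar(\tvar_j^+)\geq t_{\min}$, which already rules out Zeno behaviour. Since each candidate obeys $\bar h_i\leq\delta T_{\max}(\gamma_i,\Lambda_i)$, I would also record the companion upper bound $\svar_{j+1}-\svar_j\leq t_{\max}:=\max_i\delta T_{\max}(\gamma_i,\Lambda_i)<\infty$, which is needed later to trade jumps for flow time.

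Second, invariance. By design, Algorithm~\ref{algo_trig_window} returns, together with $\bar h$, a parameter set for which either \eqref{eq_idea_window2} or the fall-back \eqref{eq_dec_fb} is certified; because $\bar h<T_{\max}$ of that set, Proposition~\ref{prop_hybrid} applies. Both \eqref{eq_idea_window2} and \eqref{eq_dec_fb} give $V(x(\tvar_{j+1}))\leq c$, so, with the base case $V(x(0,0))\leq c$ from $x(0,0)\in\mathcal{R}$, induction yields $V(x(\tvar_j))\leq c$ for all $j$. Moreover, since $\windowC\leq c$ and the certified bound \eqref{eq_dec_tr} reads $e^{\max\{-\epsilon,2(L-\Lambda)\}\auxvar(\tvar_j^+)}V(x(\tvar_j^+))\leq e^{-\epsilon_\refer\auxvar(\tvar_j^+)}\windowC\leq c$, the standing hypothesis $c\geq\max\{1,e^{\max\{-\epsilon,2(L-\Lambda)\}\auxvar(\tvar_j^+)}\}V(x(\tvar_j^+))$ of Proposition~\ref{prop_hybrid} is met; the monotone inter-sample bound \eqref{eq_prop_hybrid1} then keeps the whole flow inside $\mathcal{R}$ and simultaneously controls $W(e)$ through $U$.

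Third, the sliding-window decrease, which is the heart of the argument. Define $M_j:=\max_{j-m+1\leq k\leq j}V(x(\tvar_k))$ and $\mu:=e^{-\min\{\epsilon_1,\epsilon_\refer\}t_{\min}}\in(0,1)$. Once the register \eqref{eq_S_window} is filled with genuine samples (i.e.\ for $j\geq m-1$) one has $\windowC=\frac{1}{m}\sum_{k=j-m+1}^{j}V(x(\tvar_k))\leq M_j$, so, using $\svar_{j+1}-\svar_j\geq t_{\min}$, the averaged case \eqref{eq_idea_window2} gives $V(x(\tvar_{j+1}))\leq e^{-\epsilon_\refer t_{\min}}M_j\leq\mu M_j$, while the fall-back case \eqref{eq_dec_fb} gives $V(x(\tvar_{j+1}))\leq e^{-\epsilon_1 t_{\min}}V(x(\tvar_j))\leq\mu M_j$. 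As $\mu<1$, the windowed maximum $M_j$ is non-increasing, and since $V(x(\tvar_k))\leq\mu M_{k-1}\leq\mu M_j$ for $j<k\leq j+m$, I obtain $M_{j+m}\leq\mu M_j$. This yields geometric decay $M_j\to 0$, hence $V(x(\tvar_j))\to 0$; the finite startup phase $j<m-1$, in which $\windowC$ may instead involve the user-chosen $\eta(0,0)$, contributes only a constant factor depending on $\norm{\eta(0,0)}$.

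Finally, I would convert this into \eqref{eq_stab_bound}: for $x$ combine \eqref{eq_V_bound_K} with the inter-sample bound \eqref{eq_prop_hybrid1}; for $e$ use that it is reset to zero at jumps and that $\gamma\phi W^2(e)\leq U$ is dominated by the same exponentially small bound, then invert \eqref{eq_w_bound}; for $\eta$ use that its entries are past values of $V$ and hence bounded by $M_{j'}\to 0$. The geometric decay in $j$ is promoted to a genuine $\mathcal{K}\mathcal{L}\mathcal{L}$ bound, decaying in both $t$ and $j$, by splitting $\mu^{j/m}$ and using $t_{\min}\leq\svar_{j+1}-\svar_j\leq t_{\max}$ to relate flow time and jump count. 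I expect the main obstacle to be exactly the window argument of the third step: condition \eqref{eq_idea_window2} bounds the \emph{next} value of $V$ only by an \emph{average of past} values, not by the current one, so no strict per-step decrease is available and one must argue at the level of $M_j$, while carefully treating the startup phase where the dynamic variable is not yet populated by genuine samples.
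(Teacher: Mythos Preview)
Your proposal is correct and follows essentially the same route as the paper's proof: the dwell-time bound from line~\ref{line_fallback} of Algorithm~\ref{algo_trig_window}, forward invariance of $\mathcal{R}$ by induction using Proposition~\ref{prop_hybrid}, a windowed-maximum argument yielding geometric decay of $V$ at sampling instants with rate $e^{-\min\{\epsilon_1,\epsilon_\refer\}t_{\min}}$ per block of $m$ jumps, and finally the assembly of a $\mathcal{K}\mathcal{L}\mathcal{L}$ bound via \eqref{eq_prop_hybrid1} for the inter-sample behaviour of $(x,e)$ and the shift-register structure for $\eta$. The only cosmetic difference is that the paper phrases the window quantity as $\max\{V(x(\tvar_j)),\eta_1(\tvar_j),\dots,\eta_{m-1}(\tvar_j)\}$, which automatically absorbs the startup phase into the initial condition $\eta(\tvar_0)$, whereas you introduce $M_j$ over genuine samples and treat $j<m-1$ separately; the resulting estimates are equivalent.
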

\begin{proof}
	See Appendix~\ref{proof_theo}.
\end{proof}
If Assumption~\ref{asum_hybrid_lyap} holds globally, i.e., with $\mathcal{X} = \mathbb{R}^{n_x}$ and $\mathcal{E} = \mathbb{R}^{n_e}$, then global stability guarantees can be obtained for the dynamic STC mechanism, as stated in the following corollary of Theorem~\ref{theo_stab}.
\begin{coro}
	Let the conditions of Theorem~\ref{theo_stab} hold for $\mathcal{R} = \mathbb{R}^{n_x}$. Then $\svar_{j+1} -\svar_j \geq t_{\min} \coloneqq \delta T_{\max}(\gamma_1,L_1+\frac{\epsilon_1}{2})$ for all $j\in\mathbb{N}_0$ and the set $\left\lbrace \left(x,e,\eta,\tau,\auxvar\right): x = 0, e= 0, \eta = 0 \right\rbrace$ is uniformly globally asymptotically stable.
\end{coro}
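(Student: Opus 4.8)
The plan is to prove the two assertions of Theorem~\ref{theo_stab} separately: first the uniform lower bound on the inter-sampling times, and then the $\mathcal{KLL}$ estimate required by Definition~\ref{def_stab}. I would argue the dwell-time bound directly from the construction of Algorithm~\ref{algo_trig_window}. Line~\ref{line_fallback} initializes $\bar h$ to $\delta T_{\max}(\gamma_{\itilde},L_{\itilde}+\frac{\epsilon_{\itilde}}{2})$, which is strictly positive because $\itilde = 1$ together with $\epsilon_1 > 0$ and $L_1 > 0$ gives $\Lambda_1 = L_1+\frac{\epsilon_1}{2} > 0$, so $T_{\max}(\gamma_1,\Lambda_1) > 0$ by its definition. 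Since $\bar h$ is only ever replaced by strictly larger values in line~\ref{line_h_update}, the returned value satisfies $\Gamma(x(\tvar_j),\eta(\tvar_j)) = \bar h \geq \delta T_{\max}(\gamma_1,L_1+\frac{\epsilon_1}{2}) = t_{\min}$, hence $\svar_{j+1}-\svar_j = \auxvar(\tvar_j^+) \geq t_{\min} > 0$ for all $j$. I would also record the complementary upper bound $\svar_{j+1}-\svar_j \leq t_{\max} := \max_i \delta T_{\max}(\gamma_i,\Lambda_i) < \infty$, valid because the algorithm always enforces $\bar h_i \leq \delta T_{\max}(\gamma_i,\Lambda_i)$ and there are finitely many parameter sets; this will later let me convert decay in $j$ into decay in the flow time.

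For stability I would fix $x(0,0) \in \mathcal{R}$, so that $V(x(\tvar_0^+)) = V(x(0,0)) \leq c$, and show inductively that the trajectory never leaves $\mathcal{R}$. By the discussion preceding the theorem, Algorithm~\ref{algo_trig_window} guarantees at every jump that either the averaging bound \eqref{eq_idea_window2} or the fall-back bound \eqref{eq_dec_fb} holds; both imply $V(x(\tvar_{j+1})) \leq c$, and Proposition~\ref{prop_hybrid}, applied with the parameter set $i^*$ actually selected for $\auxvar(\tvar_j^+)$, turns \eqref{eq_dec_tr} into $V(x(t)) \leq U(\xi) \leq c$ for all $t \in [\svar_j,\svar_{j+1}]$, so $x(t,j)\in\mathcal{R}$ and $e(t,j)\in\mathcal{E}$ throughout and Assumption~\ref{asum_hybrid_lyap} stays applicable. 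Writing $M_j := \max_{j-m+1 \leq k \leq j} V(x(\tvar_k))$, I would then note that for $j \geq m$ the averaging case gives $V(x(\tvar_{j+1})) \leq e^{-\epsilon_\refer t_{\min}} \frac{1}{m}\sum_{k=j-m+1}^{j} V(x(\tvar_k)) \leq e^{-\epsilon_\refer t_{\min}} M_j$, while the fall-back case gives $V(x(\tvar_{j+1})) \leq e^{-\epsilon_1 t_{\min}} V(x(\tvar_j)) \leq e^{-\epsilon_1 t_{\min}} M_j$; hence in either case $V(x(\tvar_{j+1})) \leq \theta M_j$ with $\theta := \max\{e^{-\epsilon_\refer t_{\min}}, e^{-\epsilon_1 t_{\min}}\} < 1$.

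The key step, which I expect to be the main obstacle, is to turn this per-step bound against a sliding window into genuine decay. I would show by a short induction that $M_{j+m} \leq \theta M_j$: for $1 \leq i \leq m$ the newly added value $V(x(\tvar_{j+i}))$ is bounded by $\theta$ times a maximum over a window whose entries are all already bounded by $M_j$ (the older entries by definition of $M_j$, the newer ones by the inductive hypothesis since $\theta < 1$), so $V(x(\tvar_{j+i})) \leq \theta M_j$ for all such $i$, giving $M_{j+m} \leq \theta M_j$. This yields geometric decay $M_j \leq \theta^{\lfloor (j-m)/m \rfloor} M_m \to 0$, and hence $V(x(\tvar_j)) \to 0$ geometrically. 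The finitely many initial steps $j < m$, where $\windowC$ still depends on the user-chosen $\eta(0,0)$, contribute only a bounded overshoot factor and would be absorbed into a class-$\mathcal{K}$ function of $\norm{[x(0,0)^\top, e(0,0)^\top, \eta(0,0)^\top]^\top}$ using \eqref{eq_V_bound_K} together with the bounded inter-sample behavior.

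Finally I would lift this to a full $\mathcal{KLL}$ estimate. Between samples Proposition~\ref{prop_hybrid} bounds $V(x(t)) \leq e^{\max\{-\epsilon_{i^*},2(L_{i^*}-\Lambda_{i^*})\}(t-\svar_j)} V(x(\tvar_j))$, and since each interval has length at most $t_{\max}$ and the parameter sets are finite in number this gives $V(x(t,j)) \leq K\, V(x(\tvar_j))$ for a finite constant $K$; thus $V(x(t,j))$ inherits the geometric decay in $j$, and combining with $j\, t_{\min} \leq \svar_j \leq j\, t_{\max}$ I would split the decay factor to obtain a bound decreasing in both $t$ and $j$, then invert \eqref{eq_V_bound_K} to bound $\norm{x(t,j)}$ by a $\mathcal{KLL}$ function of the initial norm. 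For the error, the term $\gamma \phi(\tau) W^2(e) = U(\xi) - V(x)$ in \eqref{eq_def_u} is bounded by the right-hand side of \eqref{eq_prop_hybrid1}; since $\phi$ from \eqref{eq_def_phi} is bounded below by a strictly positive constant uniformly over the finitely many parameter sets, $W(e)$ and hence $\norm{e}$ via \eqref{eq_w_bound} are dominated by $\sqrt{V(x(\tvar_j))}$ and inherit the same decay. For the dynamic variable, each component of $\eta(\tvar_j)$ equals a past value $V(x(\tvar_k))$ for $j \geq m$ and is therefore bounded by $M_j$, which decays. Taking the maximum of the three resulting $\mathcal{KLL}$ bounds produces the single $\beta \in \mathcal{KLL}$ demanded by Definition~\ref{def_stab}, completing the proof.
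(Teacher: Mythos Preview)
Your proposal is correct and follows essentially the same route as the paper's proof of Theorem~\ref{theo_stab} (from which the corollary is immediate): the dwell-time bound from line~\ref{line_fallback}, the dichotomy \eqref{eq_dec_fb}/\eqref{eq_idea_window2}, the sliding-window contraction $M_{j+m}\le\theta M_j$ with $\theta=e^{-\min\{\epsilon_1,\epsilon_\refer\}t_{\min}}$ (which the paper records as \eqref{eq_V_dec_theo}--\eqref{eq_Vk_bound_theo}), and the lift to continuous time via Proposition~\ref{prop_hybrid} together with $t_{\min}\le \svar_{j+1}-\svar_j\le t_{\max}$ are all identical. The only cosmetic differences are that the paper writes the window directly as $\max\{V(x(\tvar_j)),\eta_1(\tvar_j),\dots,\eta_{m-1}(\tvar_j)\}$ instead of your $M_j$, and bounds $\norm{[x^\top,e^\top]^\top}$ through $\mathcal{K}_\infty$ sandwich bounds on $U_i$ rather than by splitting off $\gamma\phi W^2(e)=U-V$.
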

\begin{rema}
	\label{rk_ic_window}
	A suitable choice for $\eta(\tvar_0)$ is, e.g., $\eta_k(\tvar_0) = V(\tvar_0)$ for all $k\in\left[1,\dots,m-1\right]$.
\end{rema}
\begin{rema}
	The computational effort for determining the next sampling instant depends linearly on $n_\tpar$. Typically, the proposed mechanism becomes less conservative as the number of different parameter sets increases. Thus, we can typically trade off the computational effort and the achievable length of the sampling intervals.
\end{rema}

	\begin{rema}
		The proposed dynamic STC mechanism can be modified in order to relax Assumption~\ref{asum_hybrid_lyap} by replacing \eqref{eq_v_desc_hybrid} with a more general ISS condition with a class $\mathcal{K}$ function $\varrho(\norm{x})$ instead of $\epsilon V(x)$. Then, the decrease condition in Proposition~\ref{prop_hybrid} can be replaced by 
		$\frac{d}{dt} U(\xi(t,j)) \leq \max \left\lbrace\varrho\left(\norm{x(t,j)}\right),2(L-\Lambda)U(\xi(t,j)) \right\rbrace,$
		which can still be used to construct a bound on the future evolution of $V(x(t,j))$. Algorithm~\ref{algo_trig_window} can be modified to incorporate this bound, leading to similar stability guarantees.
	\end{rema}

\begin{table}[tb]
	\begin{algorithm}[H]
		\caption[Caption for LOF]{Dynamic STC mechanism at sampling instant $t_j$}
		\label{algo_stc}
		\begin{algorithmic}[1]
			\STATE Measure $x_p(\tvar_j)$
			\STATE Set $\hat{x}_p(\tvar_j^+) = x_p(\tvar_j)$
			\STATE Set $\hat{u}(\tvar_j^+) = g_c(x_c(\tvar_j),x_p(\tvar_j)$
			\STATE Set $\auxvar(\tvar_j^+) = \Gamma(x(\tvar_j),\eta(\tvar_j))$ with $\Gamma$ defined by Algorithm~\ref{algo_trig_window}
			\STATE Set $\eta(\tvar_j^+) = S(\eta(\tvar_j),x(\tvar_j))$
			\STATE Wait until $\svar_{j+1}$
		\end{algorithmic}
	\end{algorithm}
	\vspace{-5mm}
\end{table}
\section{Example}
\label{sec_ex}

In this section, we illustrate the proposed dynamic STC mechanism with a numerical example from the literature. We demonstrate that the proposed mechanism can significantly reduce the required amount of sampling instants in comparison to existing static mechanisms from the literature.

As an example system, we use the forced Van der Pol oscillator that was also considered in \cite{delimpaltadakis2020isochronous}. The dynamics of the plant are given by
\begin{equation*}
	\begin{split}
		\dot{x}_{p1} =& x_{p2}\\
		\dot{x}_{p2} =& (1-x_{p1}^2)x_{p2}-x_{p1}+\hat{u}.
	\end{split}
\end{equation*}

The plant is controlled by the static state-feedback controller
	$u = -x_{p2}-(1-x_{p1}^2)x_{p2},$
leading for  $a_1(x,e) = 2x_{p1}e_2+e_1e_2$ and $a_2(x,e) = x_{p1}^2+2x_{p1}e_1+e_1^2$ to $f(x,e) = \begin{bmatrix}
0 & 1\\
-1 & -1
\end{bmatrix} x + \begin{bmatrix}
0 & 0\\
a_1(x,e) & -2+a_2(x,e)
\end{bmatrix} e$,
and $g(x,e) = -f(x,e)$. We consider the function $V(x) = x^\top P x$ with $P = \begin{bmatrix}
4.68 & 1.10\\
1.10 & 3.56
\end{bmatrix}$. We are interested in guaranteeing asymptotic stability for the region of attraction $\mathcal{R} \coloneqq \left\lbrace x | V(x) \leq 10 \right\rbrace$, i.e., we chose $c = 10$.

To this end, we need to determine parameters for which Assumption~\ref{asum_hybrid_lyap} holds for $\mathcal{X} \supseteq \mathcal{R}$ and $\mathcal{E} \supseteq \left\lbrace e = \hat{x}-x|\hat{x}\in \mathcal{R},x\in\mathcal{R}\right\rbrace$. In particular, we verify the assumption for the sets $\mathcal{X} \coloneqq \left\lbrace x | \norm{x} \leq \bar a \right\rbrace \supseteq \mathcal{R}$ and $\mathcal{E} \coloneqq \left\lbrace e | \norm{e} \leq 2 \bar a \right\rbrace \supseteq  \left\lbrace e = \hat{x}-x|\hat{x}\in \mathcal{R},x\in\mathcal{R}\right\rbrace$ for $\bar a = \sqrt{\frac{c}{\lambda_{\min}(P)}}$. Note that for $x \in \mathcal{X} $ and $e \in \mathcal{E}$, it holds that  $a_1(x,e) \in \left[-4\bar a,4\bar a \right]$ and $a_2(x,e) \in \left[-4\bar a,9\bar a\right]$. We can thus use the approach from \cite[Section~IV]{hertneck20simple} with the modification that we minimize $\gamma$ subject to constraint\footnote{To account for the exponential ISS condition \eqref{eq_v_desc_hybrid}, we have to replace $\Lambda_{11}$ by $\Lambda_{11} = 	A^\top P + PA +\epsilon P+M_w^2C^\top C$.} \cite[Equation~(21)]{hertneck20simple} for all combinations of the bounds on $a_1(x,e)$ and $a_2(x,e)$ simultaneously to determine the values of $\gamma$ and $L$ for a given $\epsilon$. 

In total, we have computed $n_\tpar = 21$ different parameter sets for different values of $\epsilon_i, i \in \left[2,\dots,n_\tpar\right]$ in the range of $-40 \leq \epsilon_i \leq 0.01$. Here $\epsilon_1 = 0.01$ is also used for the fall-back strategy, for which the corresponding parameter set leads to $t_{\min} =  \SI{0.022}{\second} $.

\begin{figure}
	\centering
	\resizebox{\linewidth}{!}{
		\input{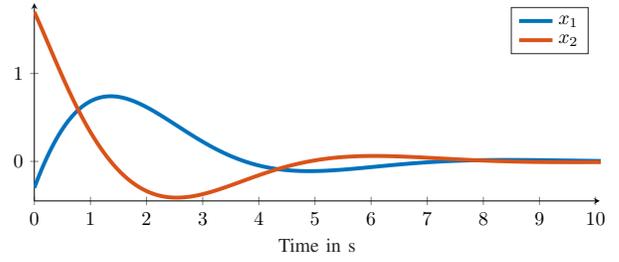}
	}
	\vspace{-3mm}
	\caption{State trajectory for the dynamic STC mechanism for $x_{p}(0) = [-0.3,1.7]^\top$.
	}
	\label{fig_states}
	\vspace{-3mm}
\end{figure}
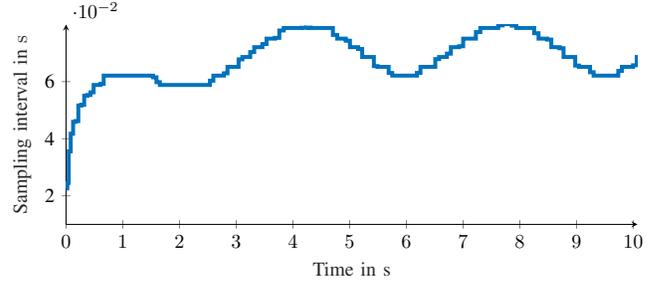
\begin{figure}
	\centering
	\resizebox{\linewidth}{!}{
		\definecolor{mycolor1}{rgb}{0.00000,0.44700,0.74100}%
\begin{tikzpicture}

\begin{axis}[%
width=4in,
height=1.4in,
at={(0in,0in)},
scale only axis,
axis lines=left,
xmin=0,
xmax=10.07565703509329,
xlabel style={font=\color{white!15!black}},
xlabel={Time in s},
ylabel style={font=\color{white!15!black}},
ylabel={Sampling interval in \SI{}{\second}},
ymin=0.01,
ymax=0.08,
ytick={0.02, 0.04, 0.06},
yticklabels = {2,4,6},
axis background/.style={fill=white},
axis x line*=bottom,
axis y line*=left,
legend style={legend cell align=left, align=left, draw=white!15!black}
]
\addplot[const plot, color=mycolor1, line width=2.0pt] table[row sep=crcr] {%
0	0.0226\\
0.0226	0.024532561866256\\
0.047132561866256	0.0355819558273184\\
0.0827145176935744	0.0417484072420151\\
0.12446292493559	0.0459267192998759\\
0.170389644235465	0.0461841179485069\\
0.216573762183972	0.0516447797263893\\
0.268218541910362	0.0518702100958133\\
0.320088752006175	0.05505287146464\\
0.375141623470815	0.0553136252309825\\
0.430455248701798	0.0562469206009806\\
0.486702169302778	0.0588314179981555\\
0.545533587300934	0.0588314179981555\\
0.604365005299089	0.0592035547639157\\
0.663568560063005	0.0621035195525551\\
0.72567207961556	0.0621035195525551\\
0.787775599168115	0.0621035195525551\\
0.84987911872067	0.0621035195525551\\
0.911982638273225	0.0621035195525551\\
0.97408615782578	0.0621035195525551\\
1.03618967737834	0.0621035195525551\\
1.09829319693089	0.0621035195525551\\
1.16039671648345	0.0621035195525551\\
1.222500236036	0.0621035195525551\\
1.28460375558856	0.0621035195525551\\
1.34670727514111	0.0621035195525551\\
1.40881079469367	0.0621035195525551\\
1.47091431424622	0.0621035195525551\\
1.53301783379878	0.0615223863686578\\
1.59454022016744	0.0598130683276127\\
1.65435328849505	0.0588314179981555\\
1.7131847064932	0.0588314179981555\\
1.77201612449136	0.0588314179981555\\
1.83084754248951	0.0588314179981555\\
1.88967896048767	0.0588314179981555\\
1.94851037848583	0.0588314179981555\\
2.00734179648398	0.0588314179981555\\
2.06617321448214	0.0588314179981555\\
2.12500463248029	0.0588314179981555\\
2.18383605047845	0.0588314179981555\\
2.2426674684766	0.0588314179981555\\
2.30149888647476	0.0588314179981555\\
2.36033030447291	0.0588314179981555\\
2.41916172247107	0.0588314179981555\\
2.47799314046923	0.0588314179981555\\
2.53682455846738	0.060323896161723\\
2.5971484546291	0.0621035195525551\\
2.65925197418166	0.0621035195525551\\
2.72135549373422	0.0621035195525551\\
2.78345901328677	0.0629196986756708\\
2.84637871196244	0.0651785882631059\\
2.91155730022555	0.0651785882631059\\
2.97673588848865	0.0651785882631059\\
3.04191447675176	0.0678006471997561\\
3.10971512395151	0.0686269817202163\\
3.17834210567173	0.0686269817202163\\
3.24696908739195	0.0703514697626841\\
3.31732055715463	0.0720679844979762\\
3.38938854165261	0.0720679844979762\\
3.46145652615058	0.0731682006910079\\
3.53462472684159	0.0750570057966924\\
3.60968173263828	0.0750570057966924\\
3.68473873843498	0.0753214159793271\\
3.7600601544143	0.077906860130253\\
3.83796701454456	0.0787848706612706\\
3.91675188520583	0.0787848706612706\\
3.9955367558671	0.0787848706612706\\
4.07432162652837	0.0787848706612706\\
4.15310649718964	0.0788554674014813\\
4.23196196459112	0.0789157294991228\\
4.31087769409024	0.0787848706612706\\
4.38966256475151	0.0787848706612706\\
4.46844743541278	0.0787848706612706\\
4.54723230607406	0.0787848706612706\\
4.62601717673533	0.0787848706612706\\
4.7048020473966	0.0763495300764399\\
4.78115157747304	0.0750570057966924\\
4.85620858326973	0.0750570057966924\\
4.93126558906642	0.0743293871914699\\
5.00559497625789	0.0720679844979762\\
5.07766296075587	0.0720679844979762\\
5.14973094525384	0.0714216388517653\\
5.22115258410561	0.0686269817202163\\
5.28977956582582	0.0686269817202163\\
5.35840654754604	0.0686269817202163\\
5.42703352926626	0.0654353084399445\\
5.4924688377062	0.0651785882631059\\
5.55764742596931	0.0651785882631059\\
5.62282601423241	0.0651785882631059\\
5.68800460249552	0.0630232975689889\\
5.75102790006451	0.0621035195525551\\
5.81313141961706	0.0621035195525551\\
5.87523493916962	0.0621035195525551\\
5.93733845872218	0.0621035195525551\\
5.99944197827473	0.0621035195525551\\
6.06154549782729	0.0621035195525551\\
6.12364901737984	0.0621035195525551\\
6.1857525369324	0.0630439163684112\\
6.24879645330081	0.0651785882631059\\
6.31397504156391	0.0651785882631059\\
6.37915362982702	0.0651785882631059\\
6.44433221809013	0.0651785882631059\\
6.50951080635323	0.067889035559049\\
6.57739984191228	0.0686269817202163\\
6.6460268236325	0.0686269817202163\\
6.71465380535271	0.0698119700688354\\
6.78446577542155	0.0720679844979762\\
6.85653375991953	0.0720679844979762\\
6.9286017444175	0.0723946217602832\\
7.00099636617778	0.0750570057966924\\
7.07605337197448	0.0750570057966924\\
7.15111037777117	0.0750570057966924\\
7.22616738356786	0.0773472915098005\\
7.30351467507766	0.0787848706612706\\
7.38229954573893	0.0787848706612706\\
7.4610844164002	0.0787848706612706\\
7.53986928706147	0.0787848706612706\\
7.61865415772275	0.0796821500525565\\
7.6983363077753	0.0801588330170585\\
7.77849514079236	0.0801372606014653\\
7.85863240139383	0.0795943180663088\\
7.93822671946013	0.0787848706612706\\
8.0170115901214	0.0787848706612706\\
8.09579646078267	0.0787848706612706\\
8.17458133144394	0.0787848706612706\\
8.25336620210521	0.0769422860368376\\
8.33030848814205	0.0750570057966924\\
8.40536549393874	0.0750570057966924\\
8.48042249973544	0.0747204631305691\\
8.55514296286601	0.0720679844979762\\
8.62721094736398	0.0720679844979762\\
8.69927893186196	0.0716919976297806\\
8.77097092949174	0.0686269817202163\\
8.83959791121196	0.0686269817202163\\
8.90822489293217	0.0686269817202163\\
8.97685187465239	0.0656204520116038\\
9.04247232666399	0.0651785882631059\\
9.1076509149271	0.0651785882631059\\
9.1728295031902	0.0651785882631059\\
9.23800809145331	0.0632946949049575\\
9.30130278635827	0.0621035195525551\\
9.36340630591082	0.0621035195525551\\
9.42550982546338	0.0621035195525551\\
9.48761334501593	0.0621035195525551\\
9.54971686456848	0.0621035195525551\\
9.61182038412104	0.0621035195525551\\
9.67392390367359	0.0621035195525551\\
9.73602742322615	0.0637581498235002\\
9.79978557304965	0.0651785882631059\\
9.86496416131275	0.0651785882631059\\
9.93014274957586	0.0651785882631059\\
9.99532133783896	0.0657283401345322\\
10.0610496779735	0.0686269817202163\\
10.1296766596937	0.0686269817202163\\
10.1983036414139	0.0686269817202163\\
10.2669306231341	0.0707340197923155\\
10.3376646429265	0.0720679844979762\\
10.4097326274244	0.0720679844979762\\
10.4818006119224	0.07328134605536\\
10.5550819579778	0.0750570057966924\\
10.6301389637745	0.0750570057966924\\
10.7051959695712	0.0754236554645068\\
10.7806196250357	0.0780848334845874\\
10.8587044585203	0.0787848706612706\\
10.9374893291815	0.0787848706612706\\
11.0162741998428	0.0787848706612706\\
11.0950590705041	0.0790824564101074\\
11.1741415269142	0.0798997897412459\\
11.2540413166554	0.0802489380685233\\
11.3342902547239	0.0800915208784057\\
11.4143817756023	0.0794136595783456\\
11.4937954351807	0.0787848706612706\\
11.572580305842	0.0787848706612706\\
11.6513651765032	0.0787848706612706\\
11.7301500471645	0.0787848706612706\\
11.8089349178258	0.0762641355803801\\
11.8851990534062	0.0750570057966924\\
11.9602560592028	0.0750570057966924\\
12.0353130649995	0.073880164577304\\
12.1091932295768	0.0720679844979762\\
12.1812612140748	0.0720679844979762\\
12.2533291985728	0.0707855876537812\\
12.3241147862266	0.0686269817202163\\
12.3927417679468	0.0686269817202163\\
12.461368749667	0.068126586883372\\
12.5294953365504	0.0651785882631059\\
12.5946739248135	0.0651785882631059\\
12.6598525130766	0.0651785882631059\\
12.7250311013397	0.0648981977410241\\
12.7899292990807	0.0628319989736414\\
12.8527612980544	0.0621035195525551\\
12.9148648176069	0.0621035195525551\\
12.9769683371595	0.0621035195525551\\
13.039071856712	0.0621035195525551\\
13.1011753762646	0.0621035195525551\\
13.1632788958171	0.0621035195525551\\
13.2253824153697	0.0624480996293343\\
13.287830514999	0.064343751093843\\
13.3521742660929	0.0651785882631059\\
13.417352854356	0.0651785882631059\\
13.4825314426191	0.0651785882631059\\
13.5477100308822	0.0665443136139\\
13.6142543444961	0.0686269817202163\\
13.6828813262163	0.0686269817202163\\
13.7515083079365	0.0686269817202163\\
13.8201352896567	0.0716201710227067\\
13.8917554606794	0.0720679844979762\\
13.9638234451774	0.0720679844979762\\
14.0358914296754	0.0741363253994818\\
14.1100277550749	0.0750570057966924\\
14.1850847608716	0.0750570057966924\\
14.2601417666683	0.076157310080477\\
14.3362990767487	0.0787848706612706\\
14.41508394741	0.0787848706612706\\
14.4938688180713	0.0787848706612706\\
14.5726536887325	0.0787848706612706\\
14.6514385593938	0.0793721027160132\\
14.7308106621098	0.080067231045706\\
14.8108778931555	0.0802804900399157\\
14.8911583831954	0.0799807378754468\\
14.9711391210709	0.0791619372119329\\
15.0503010582828	0.0787848706612706\\
};
\end{axis}

\end{tikzpicture}
	\vspace{-4mm}
	\caption{Sampling intervals for the dynamic STC mechanism for the trajectory starting at $x_{p}(0) = [-0.3,1.7]^\top$.
	}
	\label{fig_inters}
\end{figure}
\begin{figure}
	\centering
	\resizebox{\linewidth}{!}{
		\input{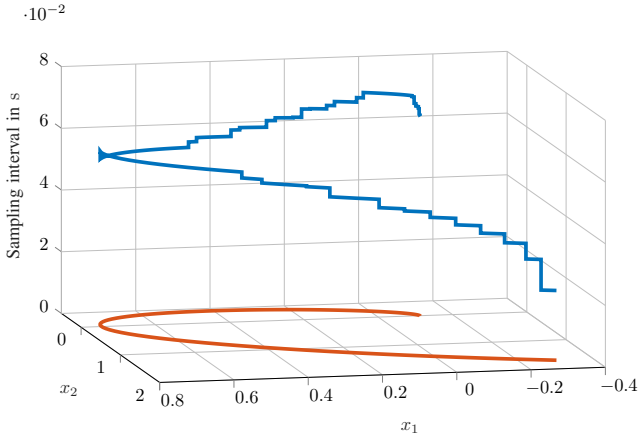}}
	\vspace{-4mm}
	\caption{Evolution of the sampling intervals along the state trajectory starting at $x_p(0) = [-0.3,1.7]^\top$ \change{}{for the first \SI{5}{\second}}.} 
	\label{fig_traj}
	\vspace{-3mm}
\end{figure}

In Figure~\ref{fig_states}, a state trajectory for a simulation with $\epsilon_\refer = 0.01, m = 30$ and $\eta(\tvar_0)$ chosen according to Remark~\ref{rk_ic_window} starting at $x_p(0) = [-0.3,1.7]^\top$ is depicted. Figure~\ref{fig_inters} shows the time evolution of the resulting sampling intervals. Figure~\ref{fig_traj} displays the evolution of the sampling intervals along the considered trajectory. It can be seen that the sampling interval rapidly increases at the beginning. After roughly \SI{1}{\second} it starts to oscillate slowly between \SI{0.05}{\second} and \SI{0.07}{\second}. When simulated for a longer amount  of time, this oscillating behavior continues, which is not surprising as the considered system exhibits oscillating behavior as well.

The considered example allows also a comparison of the proposed dynamic STC mechanism to the static STC mechanism from \cite{delimpaltadakis2020isochronous}. For the considered trajectory, the dynamic STC mechanism produces significantly larger sampling intervals at roughly the same control performance. 
Whilst in \cite{delimpaltadakis2020isochronous}, a total number of 1448 sampling instants is reported during the first \SI{5}{\second} of the simulation, the proposed dynamic STC requires only 89 sampling instants, which constitutes a reduction by a factor 16. Nevertheless, the state trajectories for both mechanisms are similar\change{}{, which can, e.g., be seen when comparing Fig.~\ref{fig_traj} to \cite[Fig. 9]{delimpaltadakis2020isochronous}}.
This demonstrates that dynamic STC can lead to  significantly larger sampling intervals than static STC approaches. \change{}{Note that it might also be possible to emulate a dynamic ETC condition in the setup of \cite{delimpaltadakis2020isochronous,delimpaltadakis2020region}, leading to less conservative results.}
\section{Summary and outlook}
In this paper, we have presented a dynamic STC mechanism that takes into account the past system behavior using a dynamic variable that stores the past evolution of the Lyapunov function. Using a hybrid Lyapunov function approach, sampling instants are chosen such that an average decrease of the Lyapunov function is ensured leading to guarantees for asymptotic stability. In a numerical example, it was demonstrated that the proposed dynamic STC mechanism can reduce the required amount of samples significantly in comparison to existing static STC approaches.  

Future work will address different strategies to update the dynamic variable in order to adjust the behavior of the STC mechanism, e.g., for setpoint changes. Moreover, we are interested in obtaining stability and performance guarantees also for nonlinear systems subject to perturbations.
\label{sec_sum}

\bibliography{../../../../Literatur/literature}

\begin{thebibliography}{10}
\providecommand{\url}[1]{#1}
\csname url@samestyle\endcsname
\providecommand{\newblock}{\relax}
\providecommand{\bibinfo}[2]{#2}
\providecommand{\BIBentrySTDinterwordspacing}{\spaceskip=0pt\relax}
\providecommand{\BIBentryALTinterwordstretchfactor}{4}
\providecommand{\BIBentryALTinterwordspacing}{\spaceskip=\fontdimen2\font plus
\BIBentryALTinterwordstretchfactor\fontdimen3\font minus
  \fontdimen4\font\relax}
\providecommand{\BIBforeignlanguage}[2]{{%
\expandafter\ifx\csname l@#1\endcsname\relax
\typeout{** WARNING: IEEEtran.bst: No hyphenation pattern has been}%
\typeout{** loaded for the language `#1'. Using the pattern for}%
\typeout{** the default language instead.}%
\else
\language=\csname l@#1\endcsname
\fi
#2}}
\providecommand{\BIBdecl}{\relax}
\BIBdecl

\bibitem{hertneck21dynamic}
M.~Hertneck and F.~Allg\"ower, ``Dynamic self-triggered control for nonlinear
  systems based on hybrid lyapunov functions,'' in \emph{Proc. 60th IEEE Conf.
  Decision Control}, 2021, to appear.

\bibitem{astroem1999comparison}
K.~Johan~Åström and B.~Bernhardsson, ``\BIBforeignlanguage{en}{Comparison of
  periodic and event based sampling for first-order stochastic systems},'' in
  \emph{\BIBforeignlanguage{en}{Proc. 14th IFAC World Congress}}, 1999, pp.
  5006--5011.

\bibitem{arzen1999simple}
K.-E. Åarzén, ``\BIBforeignlanguage{en}{A simple event-based {PID}
  controller},'' in \emph{\BIBforeignlanguage{en}{Proc. 14th IFAC World
  Congress}}, 1999, pp. 8687--8692.

\bibitem{heemels2012introduction}
W.~P. M.~H. Heemels, K.~H. Johansson, and P.~Tabuada, ``An introduction to
  event-triggered and self-triggered control,'' in \emph{Proc. 51st {IEEE}
  {Conf.} {Decision} {Control}}, 2012, pp. 3270--3285.

\bibitem{mazo2009self}
M.~Mazo, A.~Anta, and P.~Tabuada, ``On self-triggered control for linear
  systems: {Guarantees} and complexity,'' in \emph{{European} {Control}
  {Conf.}}, 2009, pp. 3767--3772.

\bibitem{anta2010sample}
A.~Anta and P.~Tabuada, ``To {Sample} or not to {Sample}: {Self}-{Triggered}
  {Control} for {Nonlinear} {Systems},'' \emph{IEEE Trans. Autom. Control},
  vol.~55, no.~9, pp. 2030--2042, 2010.

\bibitem{brunner2019event}
F.~D. Brunner, W.~P. M.~H. Heemels, and F.~Allgöwer,
  ``\BIBforeignlanguage{en}{Event-triggered and self-triggered control for
  linear systems based on reachable sets},''
  \emph{\BIBforeignlanguage{en}{Automatica}}, vol. 101, pp. 15--26, 2019.

\bibitem{delimpaltadakis2020isochronous}
G.~Delimpaltadakis and M.~Mazo, ``Isochronous {Partitions} for {Region}-{Based}
  {Self}-{Triggered} {Control},'' \emph{IEEE Trans. Autom. Control}, vol.~66,
  no.~3, pp. 1160--1173, 2020.

\bibitem{delimpaltadakis2020region}
------, ``Region-{Based} {Self}-{Triggered} {Control} for {Perturbed} and
  {Uncertain} {Nonlinear} {Systems},'' \emph{IEEE Trans. Control of Network
  Systems}, 2021.

\bibitem{benedetto2011digital}
M.~D.~D. Benedetto, S.~D. Gennaro, and A.~D'Innocenzo, ``Digital self triggered
  robust control of nonlinear systems,'' in \emph{Proc 50th {IEEE} {Conf.}
  {Decision} {Control} and {European} {Control} {Conf.}}, 2011, pp. 1674--1679.

\bibitem{tiberi2013simple}
U.~Tiberi and K.~H. Johansson, ``\BIBforeignlanguage{en}{A simple
  self-triggered sampler for perturbed nonlinear systems},''
  \emph{\BIBforeignlanguage{en}{Nonlinear Analysis: Hybrid Systems}}, vol.~10,
  pp. 126--140, Nov. 2013.

\bibitem{theodosis2018self}
D.~Theodosis and D.~V. Dimarogonas, ``Self-{Triggered} {Control} under
  {Actuator} {Delays},'' in \emph{Proc. 57th {IEEE} {Conf.} {Decision}
  {Control}}, 2018, pp. 1524--1529.

\bibitem{proskurnikov2019lyapunov}
A.~V. Proskurnikov and M.~Mazo, ``Lyapunov {Event}-triggered {Stabilization}
  with a {Known} {Convergence} {Rate},'' \emph{IEEE Trans. Autom. Control},
  vol.~65, no.~2, pp. 507--521, 2020.

\bibitem{girard2015dynamic}
A.~Girard, ``Dynamic {Triggering} {Mechanisms} for {Event}-{Triggered}
  {Control},'' \emph{IEEE Trans. Autom. Control}, vol.~60, no.~7, pp.
  1992--1997, 2015.

\bibitem{carnevale2007lyapunov}
D.~Carnevale, A.~R. Teel, and D.~Nesic, ``A {Lyapunov} {Proof} of an {Improved}
  {Maximum} {Allowable} {Transfer} {Interval} for {Networked} {Control}
  {Systems},'' \emph{IEEE Trans. Autom. Control}, vol.~52, no.~5, pp. 892--897,
  2007.

\bibitem{nesic2009explicit}
D.~Nesic, A.~R. Teel, and D.~Carnevale, ``Explicit {Computation} of the
  {Sampling} {Period} in {Emulation} of {Controllers} for {Nonlinear}
  {Sampled}-{Data} {Systems},'' \emph{IEEE Trans. Autom. Control}, vol.~54,
  no.~3, pp. 619--624, 2009.

\bibitem{hertneck20stability}
M.~Hertneck, S.~Linsenmayer, and F.~Allg\"ower, ``{Stability Analysis for
  Nonlinear Weakly Hard Real-Time Control Systems},'' in \emph{Proc. 21st IFAC
  World Congress}, 2020, pp. 2632--2637.

\bibitem{goebel2006solutions}
R.~Goebel and A.~R. Teel, ``\BIBforeignlanguage{en}{Solutions to hybrid
  inclusions via set and graphical convergence with stability theory
  applications},'' \emph{\BIBforeignlanguage{en}{Automatica}}, vol.~42, no.~4,
  pp. 573--587, 2006.

\bibitem{hertneck20simple}
M.~Hertneck and F.~Allg{\"o}wer, ``{A Simple Approach to Increase the Maximum
  Allowable Transmission Interval},'' in \emph{Proc. 3rd IFAC Conf. on
  Modelling, Identification and Control of Nonlinear Systems (MICNON)}, Tokyo,
  Japan, 2021, pp. 443--448.

\bibitem{khalil2002nonlinear}
H.~Khalil, \emph{Nonlinear Systems Third Edition}.\hskip 1em plus 0.5em minus
  0.4em\relax Prentice Hall, Upper Saddle River, NJ, 2002.

\end{thebibliography}
\appendix
\subsection{Proof of Proposition~\ref{prop_hybrid}}
\label{app_hyb}
	This proof follows the same lines as the proof of \cite[Proposition~12]{hertneck20stability} but is adapted to the local setup considered in this paper.
	Recall from \cite{carnevale2007lyapunov} that $\phi(\tau) \in \left[\lambda, \lambda^{-1}\right]$ for all $\tau \in \left[0, \tilde{T}_{\max} \right]$, where $\tilde{T}_{\max} = \tilde{T}_{\max}(\lambda,\gamma,\lvar )$ with		
		\begin{equation*}
			\footnotesize
		\tilde{T}_{\max}(\lambda,\gamma,\lvar ) \defeq \begin{cases}\vspace{1mm}
		\frac{1}{\lvar r} \mathrm{arctan}\left(\frac{r(1-\lambda)}{2 \frac{\lambda}{1+\lambda} \left(\frac{\gamma}{\lvar }-1\right)+1+\lambda}\right) & \gamma > \lvar \\ \vspace{1mm}
		\frac{1}{\lvar } \frac{1-\lambda}{1+\lambda} & \gamma = \lvar \\
		\frac{1}{\lvar r} \mathrm{arctanh}\left(\frac{r(1-\lambda)}{2 \frac{\lambda}{1+\lambda} \left(\frac{\gamma}{\lvar }-1\right)+1+\lambda}\right) &\gamma < \lvar 
		\end{cases}	
		\normalsize
		\end{equation*}

	and  $r$ defined as in \eqref{eq_def_r}.

	Now, we consider \eqref{eq_sys_hyb}. For each $\auxvar(s_j^+) < T_{\max} (\gamma,\lvar )$, there is a $\lambda \in \left(0,1\right)$ such that $\auxvar(s_j^+)  = \tilde{T}_{\max}(\lambda,\gamma,\lvar )$.
	Similar as in \cite{nesic2009explicit}, we can conclude that
	\begin{equation}
	\label{eq_U_lyap}
		\underline{\alpha}_U\left(\norm{\left[x^\top,e^\top\right]^\top}\right) \leq U(\xi) \leq \overline{\alpha}_U\left(\norm{\left[x^\top,e^\top\right]^\top}\right)
	\end{equation}
	 for some $\underline{\alpha}_U, \overline{\alpha}_U \in \mathcal{K}_\infty$. We observe due to Assumption~\ref{asum_hybrid_lyap} that for $\tau \in \left[0,\auxvar(s_j^+)\right]$, almost all $x\in\mathcal{X}$ and almost all $e \in \mathcal{E}$
	\begin{equation*}
	\begin{split}
	\left\langle \nabla U(\xi), F(\xi) \right\rangle
	\leq&-\epsilon V(x) -H^2(x,e) + \gamma^2 W^2(e) \\
	&+2\gamma\phi(\tau)W(e)(L W(e) + H(x,e))\\
	& - \gamma W^2(e) (2\lvar \phi(\tau)+\gamma(\phi^2(t)+1))\\
	\leq&-\epsilon V(x) -(H(x,e) -\gamma\phi(\tau)W(e))^2\\ 
	&+ 2\gamma\phi(\tau)W^2(e)(L-\lvar )\\
	\leq& -\epsilon V(x) + 2\gamma\phi(\tau)W^2(e)(L-\lvar ).
		\end{split}
	\end{equation*} 
 	Since $x(s_j^+) \in \mathcal{X}$ and $e(s_j^+) = 0 \in \mathcal{E}$, we obtain for $t$ sufficiently close to $\svar_j$
 	\begin{equation}
 			\label{eq_dec_u}
 			\begin{split}
	 			&\frac{d}{dt} U(\xi(\svar,j+1))\\ \leq& -\epsilon V(x(\svar,j+1)) + 2\gamma\phi(\tau(\svar,j))W^2(e(\svar,j))(L-\lvar )
 			\end{split}	
 	\end{equation} 
 	and thus due to the comparison Lemma (cf. \cite[p. 102]{khalil2002nonlinear})
 	\begin{align}
 	&U(\xi(\svar,j+1)) \nonumber\\ 
 	\leq& \exp\left( \max \left\lbrace -\epsilon, 2(L - \lvar ) \right\rbrace (t-\svar_j)\right)U(\xi(\tvar_j^+))\nonumber \\
 	\leq& \exp\left( \max \left\lbrace -\epsilon, 2(L - \lvar ) \right\rbrace (t-\svar_j)\right)V(x(\tvar_j^+)). \nonumber
 	\end{align}
 	Hence, $x(\svar,j+1) \in \mathcal{X}$ for $t$ sufficiently close to $\svar_j$. 	Note that $x(\tvar_j^+) \in \mathcal{X}$ and that $e(\svar,j) = x(\tvar_j^+) - x(\svar,j+1)$. It therefore also holds that $e(\svar,j+1) \in \mathcal{E}$ for $t$ sufficiently close to $\svar_j$. We can now use this argumentation iteratively \change{}{together with the fact that $V(x(t) \leq U(\xi(t)))$ } to observe that \eqref{eq_prop_hybrid1} holds.
	\hfill\hfill \qed
\subsection{Proof of Theorem~\ref{theo_stab}}
\label{proof_theo}
		We will now first show that a bound on the evolution of $V(x(\tvar_j))$ follows from the choice of $\auxvar(s_j^+)$ for each $j \in \mathbb{N}$ according to Algorithm~\ref{algo_trig_window}. Then we will use this bound to construct explicitly the function $\beta$.
		
		Recall that 
		 the flow and jump sets of the hybrid system \eqref{eq_sys_hyb} are designed such that the jumps of the system occur exactly at sampling instants that are described by $\tvar_j \coloneqq  (\svar_j,j)$ and by $\tvar_j^+ = (\svar_j,j+1)$, where $\tvar_0 = (0,0)$ due to the sampling at $t_0 = 0$ (modeled in \eqref{eq_sys_hyb} by the initial condition $\tau(0,0) = \auxvar(0,0)$). Therefore, it holds that  $e(\tvar_j^+) = 0, \eta(\tvar_j^+)=S(\eta(\tvar_j),x(\tvar_j)), \tau(\tvar_j^+) = 0$ and  $\auxvar(\tvar_j^+)=\Gamma(x(\tvar_j),\eta(\tvar_j))$, where $\Gamma(x(\tvar_j,\eta(\tvar_j)))$ is defined by Algorithm~\ref{algo_trig_window}. 
		 
		 Obviously $\bar h \geq \delta T_{\max}(\gamma_1,L_1+\frac{\epsilon_1}{2}) = t_{\min}$ in Algorithm~\ref{algo_trig_window}, and thus it follows that $t_{j+1} - t_j =  \Gamma(x(\tvar_j),\eta(\tvar_j)) \geq t_{\min}$.
		 
		 Now suppose that $x(\tvar_j^+) \in \mathcal{R}$.  If Algorithm~\ref{algo_trig_window} outputs $\Gamma(x(\tvar_j),\eta(\tvar_j)) =  \delta T_{\max}(\gamma_1,L_1+\frac{\epsilon_1}{2})$, then $c \geq \max\left\lbrace 1, e^{\max\left\lbrace -\epsilon_1,2(L_1-\lvar_1 ) \right\rbrace(\auxvar(s_j^+))} \right\rbrace V(x(\tvar_j^+)$. Thus it follows in this case directly from Proposition~\ref{prop_hybrid} that 
		 \begin{equation}
		 \label{eq_theo_case1}
		 V(x(\tvar_{j+1})) \leq e^{-\epsilon_{\itilde} \auxvar(\tvar_j^+)} V(x(\tvar_j)).
		 \end{equation}

		 If Algorithm~\ref{algo_trig_window} outputs $\bar h = \bar h_i > \delta T_{\max}(\gamma_1,L_1+\frac{\epsilon_1}{2})$ for some $i \in \left[2,\dots,n_\tpar\right]$, then we know (cf. the discussion preceding Algorithm~\ref{algo_trig_window}) that \eqref{eq_dec_tr} holds for $\auxvar(s_j^+)$. Note that this implies since $\genC \leq c$ that $c \geq \max\left\lbrace 1, e^{\max\left\lbrace -\epsilon_i,2(L_i-\lvar_i ) \right\rbrace(\auxvar(s_j^+))} \right\rbrace V(x(\tvar_j^+)$. 
		 
		 We can thus conclude in this case from Proposition~\ref{prop_hybrid} that 
		 \begin{align}
		 \nonumber &V(x(\tvar_{j+1}))\\
		  \leq&\min \left\lbrace e^{-\epsilon_\refer\auxvar(\tvar_j^+)}\frac{1}{m} \left(V(x(\tvar_j))+\sum_{k=1}^{m-1} \eta_k(\tvar_j)\right),c\right\rbrace. 		 \label{eq_theo_case2}
		 \end{align}

		 Since either \eqref{eq_theo_case1} or \eqref{eq_theo_case2} hold, we can thus infer that 
		 \begin{equation}
		 \label{eq_V_dec_theo}
		 	\resizebox{\linewidth}{!}{$V(x(\tvar_{j+1})) \leq e^{-\tilde \epsilon t_{\min}} \max \left\lbrace V(x(\tvar_{j})),\eta_1(\tvar_j),\dots,\eta_{m-1}(\tvar_j) \right\rbrace$}
		 \end{equation}
		 where $\tilde{\epsilon} \coloneqq \min\left\lbrace \epsilon_1,\epsilon_\refer\right\rbrace > 0$ and that, if $x(\tvar_j^+) \in \mathcal{R}$, then $x(\tvar_{j+1}) = x(\tvar_{j+1}^+) \in \mathcal{R}$.
		Since $x(\tvar_0) = x_0 \in \mathcal{R}$, we obtain thus by induction for all $j\in \mathbb{N}_0$ that $x(\tvar_j) \in \mathcal{R}$ and that 
		 		 \begin{equation}
		 		 \label{eq_V_bound_theo}
		 		 		V(x(\tvar_j)) \leq \max \left\lbrace V(x(\tvar_0)),\eta_1(\tvar_0),\dots,\eta_{m-1}(\tvar_0) \right\rbrace.
		 		 \end{equation}  
		   Next, we show that $V(x(\tvar_j))$ is bounded by an exponentially decaying bound that will help us to construct $\beta$. The update of $\eta(\tvar_j)$ according to \eqref{eq_S_window} implies for $j>m$ and each $k\in\left[1,\dots,m-1\right]$ that $\eta_k(\tvar_j) = V(x(\tvar_{j-m+k}))$. With \eqref{eq_V_dec_theo} we can conclude for $k \in \left[1,\dots,m-1\right]$   
		 \begin{align}
	\nonumber
		 	&V(x(\tvar_{lm+k})) \\ 
		 	\leq& e^{-\tilde \epsilon t_{\min}} \max \left\lbrace V(x(s_{(l-1)m+k})),\dots, V(x(\tvar_{lm+k-1})) \right\rbrace \label{eq_V_desc_theo}
		 \end{align}
		 for all $l \in \mathbb{N}$. Using \eqref{eq_V_desc_theo} recursively, we thus obtain  
		 with~\eqref{eq_V_bound_theo}
		 \begin{align}
		 \nonumber	&V(x(\tvar_{lm+k}))\\
		 \nonumber \leq&  e^{-\tilde \epsilon t_{\min} l} \max \left\lbrace V(x(\tvar_{1})),\dots, V(x(\tvar_{m})) \right\rbrace\\
		 \nonumber				   \leq& e^{-\tilde \epsilon t_{\min} l} \max \left\lbrace V(x(\tvar_0)),\eta_1(\tvar_0),\dots,\eta(m-1)(\tvar_0) \right\rbrace\\
						   \leq& e^{-\tilde \epsilon t_{\min} l} \max \left\lbrace \bar{\alpha}_V(\norm{x(s_0)}),\norm{\eta(s_0)} \right\rbrace. \label{eq_Vk_bound_theo}
		 \end{align} 
		 Now we chose $k$ and $l$ such that $j = lm+k$, which requires $l \geq \frac{j-m}{m}$. Thus, we obtain from \eqref{eq_Vk_bound_theo}
		 \begin{equation}
		 \label{eq_V_bound_new}
		 	V(x(\tvar_{j}))  
		 	\leq e^{\frac{ -\tilde\epsilon t_{\min}}{m} \left( j-m\right)} \max \left\lbrace \bar{\alpha}_V(\norm{x(s_0)}),\norm{\eta(s_0)} \right\rbrace.
		 \end{equation} 
		 Note that
		 \begin{equation*}
		 	\svar_{j+1} - \svar_j = \auxvar(\tvar_j^+) \leq t_{\max} \coloneqq \underset{i \in \left[1,\dots,n_\tpar\right]}{\max} \delta T_{\max}(\gamma_i,L_i+\frac{\epsilon_i}{2})
		 \end{equation*} holds for any $j \in \mathbb{N}_0$, which implies $\frac{j}{2} \geq \frac{\svar_j}{2t_{\max}}$. Using this in \eqref{eq_V_bound_new} we obtain
		 \newlength\mylen
		 \settoheight\mylen{$\max \left\lbrace \underline{\alpha}_V^{-1}\left(\norm{\begin{bmatrix}
		 		x(\tvar_0)\\
		 		\eta(\tvar_0)
		 		\end{bmatrix}}\right),\norm{\begin{bmatrix}
		 		x(\tvar_0)\\
		 		\eta(\tvar_0)
		 		\end{bmatrix}} \right\rbrace$}
		\begin{align}
		 	\nonumber	&V(x(\tvar_{j})) \\ 
		 	\nonumber	\leq& e^{\frac{-\tilde \epsilon t_{\min}}{2m} \left( j+\frac{\svar_j}{\svar_{\max}} - 2m\right)} \max \left\lbrace \bar{\alpha}_V(\norm{x(s_0)}),\norm{\eta(s_0)} \right\rbrace\\
		 	\nonumber \leq & e^{\frac{-\tilde \epsilon t_{\min}}{2m} \left( j+\frac{\svar_j}{\svar_{\max}} - 2m\right)} \resizebox{.62\linewidth}{\mylen}{$\max \left\lbrace \bar{\alpha}_V\left(\norm{\begin{bmatrix}
		 		x(\tvar_0)\\
		 		\eta(\tvar_0)
		 		\end{bmatrix}}\right),\norm{\begin{bmatrix}
		 		x(\tvar_0)\\
		 		\eta(\tvar_0)
		 		\end{bmatrix}} \right\rbrace$}\\
		 	\eqqcolon& \tilde{\beta}_1 \left(\norm{\begin{bmatrix}
		 		x(\tvar_0)\\
		 		\eta(\tvar_0)
		 		\end{bmatrix}},\svar_j,j\right). \label{eq_Vk_bound_theo2}
		 \end{align}
		 Note that $\tilde{\beta}_1$ is a class $\mathcal{K}\mathcal{L}\mathcal{L}$ function. Moreover,
		 \eqref{eq_Vk_bound_theo2} can as well be used to show a similar bound for $\eta$. In particular, we obtain due to the update of $\eta$ according to \eqref{eq_S_window} that
		 \begin{equation}
		 \label{eq_eta_bound_k}
		 	\eta(\tvar_j) \leq (m-1) \tilde{\beta}_1\left(\norm{\begin{bmatrix}
		 		x(\tvar_0)\\
		 		\eta(\tvar_0)
		 		\end{bmatrix}},t_j,j\right).
		 \end{equation} 
		To show asymptotic stability, we will now derive an analogous bound between the sampling instants. For $\eta(\svar,j)$, since $\dot{\eta} = 0$ during flows, we obtain for $\svar_{j}\leq \svar \leq \svar_{j+1}$ and all $j \in \mathbb{N}_0$
		\begin{equation}
			\label{eq_eta_bound}
			\eta(\svar,j+1) \leq \eta(\tvar_j).
		\end{equation} 
		
		For $x(\svar,j)$ and $e(\svar,j)$, we can use Proposition~\ref{prop_hybrid} to derive a bound. Since, for all $j\in\mathbb{N}_0$, $\auxvar(\tvar_j^+) \leq \delta T_{\max}(\gamma_i,L_i+\frac{\epsilon_i}{2})$ for some $i \in \left[1,\dots,n_\tpar\right]$,
		 \begin{equation}
		 \label{eq_ui_bound_theo}
		 	U_i(\xi(t,j+1)) \leq e^{\max \left\lbrace \epsilon_i,1 \right\rbrace t_{\max}} 	V(x(\tvar_{j})),
		 \end{equation}
		 for $\svar_{j} \leq t < \svar_{j+1}$, where $U_i(\xi)$ is the function according to \eqref{eq_def_u} for the parameters $\gamma_i$ and $\Lambda_i = L_i +\frac{\epsilon_i}{2}$. Since $\phi \in \left[ \lambda,\lambda^{-1}\right]$ for some $\lambda > 0$, there exist two functions $\varphi_{i,1}, \varphi_{i,2} \in \mathcal{K}_\infty$, such that for all $x,e,\tau$, 
		 	$\varphi_{i,1}\left(\norm{\begin{bmatrix}
			 	x\\
			 	e
		 		\end{bmatrix}}\right) \leq U_i(\xi) \leq 	\varphi_{i,2}\left(\norm{\begin{bmatrix}
		 		x\\
		 		e
		 		\end{bmatrix}}\right)$
		 holds. This implies together with \eqref{eq_ui_bound_theo} that for $\svar_j \leq \svar \leq \svar_{j+1}$ with $\epsilon_{\max} = t_{\max} \max\left\lbrace 1, \epsilon_1,\dots,\epsilon_{n_\tpar} \right\rbrace$ and $\varphi_{\max}(\cdot) = \max \left\lbrace \varphi_{1,1}^{-1} (\cdot), \dots,\varphi_{n_\tpar,1}^{-1} (\cdot) \right\rbrace\in\mathcal{K}_\infty,$
		 \newlength\mylent
		 \settoheight\mylent{$	\norm{\begin{bmatrix}
		 		x(\svar,j)\\
		 		e(\svar,j)
		 		\end{bmatrix}} \leq \varphi_{\max}\left(e^{\epsilon_{\max}} V(x(\tvar_j))\right)
		  \leq \tilde{\beta}_2 \left(\norm{\begin{bmatrix}
		 		x(\tvar_0)\\
		 		\eta(\tvar_0)
		 		\end{bmatrix}},\svar_j,j\right),$}		 
		 \begin{equation}
		 	\resizebox{\columnwidth}{\mylent}{$\norm{\begin{bmatrix}
		 		x(\svar,j)\\
		 		e(\svar,j)
		 		\end{bmatrix}} \leq \varphi_{\max}\left(e^{\epsilon_{\max}} V(x(\tvar_j))\right)
		 	 \leq \tilde{\beta}_2 \left(\norm{\begin{bmatrix}
		 		x(\tvar_0)\\
		 		\eta(\tvar_0)
		 		\end{bmatrix}},\svar_j,j\right)$}\label{eq_xe_bound_k}
		 \end{equation}
		 holds, where $\tilde{\beta}_2 \in \mathcal{K}\mathcal{L}\mathcal{L}$.  Putting now together \eqref{eq_eta_bound_k},\eqref{eq_eta_bound} and \eqref{eq_xe_bound_k}, we obtain for $\svar_j \leq \svar \leq \svar_{j+1}$ that
		 \begin{align*}
		 	&\norm{\begin{bmatrix}
		 		x(\svar,j)\\
		 		e(\svar,j)\\
		 		\eta(\svar,j)
		 		\end{bmatrix}} \leq \norm{\begin{bmatrix}
		 		x(\svar,j)\\
		 		e(\svar,j)\\
		 		\end{bmatrix}} + \norm{\begin{bmatrix}
		 		\eta(\svar,j)
		 		\end{bmatrix}}\\
		 	 \leq& \tilde{\beta}_2 \left(\norm{\begin{bmatrix}
		 	 	x(\tvar_0)\\
		 	 	\eta(\tvar_0)
		 	 	\end{bmatrix}},\svar_j,j\right) 
		 	 + (m-1) \tilde{\beta}_1\left(\norm{\begin{bmatrix}
		 	 	x(\tvar_0)\\
		 	 	\eta(\tvar_0)
		 	 	\end{bmatrix}},\svar_j,j\right).
		 \end{align*}
		 Using a time-shift, we can thus conclude that \eqref{eq_stab_bound} holds for
		 \begin{align*}
		 	&\beta\left(\norm{\begin{bmatrix}
		 		x(\tvar_0)\\
		 		e(\tvar_0)\\
		 		\eta(\tvar_0)
		 		\end{bmatrix}},t,j\right)  \\
		 	\coloneqq&\tilde{\beta}_2 \left(\norm{\begin{bmatrix}
		 		x(\tvar_0)\\
		 		\eta(\tvar_0)
		 		\end{bmatrix}},\max\left\lbrace\svar-t_{\max},0\right\rbrace,j\right) \\
		 	+& (m-1) \tilde{\beta}_1\left(\norm{\begin{bmatrix}
		 		x(\tvar_0)\\
		 		\eta(\tvar_0)
		 		\end{bmatrix}},\max\left\lbrace\svar-t_{\max},0\right\rbrace,j\right).&\pushright{\qed}
		 \end{align*}

\end{document}